\documentclass[a4paper,11pt]{article}
\bibliographystyle{plainurl}% the recommended bibstyle
\usepackage{authblk}
\usepackage{todonotes}
\usepackage{multirow}
\usepackage{etoolbox}
\usepackage{amsmath, amssymb, amsthm,complexity}
\usepackage{graphicx}
\usepackage{xcolor}
\usepackage{tcolorbox}
\tcbuselibrary{skins} 
\usepackage{algorithm}
\usepackage{thmtools}
\usepackage{thm-restate}
\usepackage[noend]{algpseudocode}
\newtheorem{mytheorem}{Theorem}
\newtheorem{mylemma1}{Lemma}
\newtheorem{mycorollary}{Corollary}
\newtheorem{mylemma}{Lemma}%[section]
\newtheorem{myproposition}{Proposition}%[section]
\newtheorem{myobservation}{Observation}%[section]
\newtheorem{mydefinition}{Definition}%[section]

\usepackage{amsmath}
\usepackage{amssymb}
\usepackage{amsthm}
\usepackage{graphicx}
\usepackage{complexity} 
\usepackage{todonotes} 
\usepackage{enumitem}
\setlist{nosep}

\usepackage[hang]{footmisc}

\newcommand{\yes}{\textsc{Yes}}
\newcommand{\no}{\textsc{No}}

\newtheorem{observation}{Observation}
\newtheorem{reduction rule}{Reduction Rule}[section]
\newtheorem{marking-scheme}{Marking Scheme}[section]

\usepackage{mathtools}
\usepackage{bold-extra}

\usepackage{fullpage}

\algnewcommand{\LeftComment}[1]{\Statex \(\triangleright\) #1}

\algrenewcommand\algorithmicrequire{\textbf{Input:}}
\algrenewcommand\algorithmicensure{\textbf{Output:}}

\newcommand{\parprob}[4]{
  \vspace{2mm}
\noindent\fbox{
  \begin{minipage}{0.96\textwidth}
  \begin{tabular*}{\textwidth}{@{\extracolsep{\fill}}lr} \textsc{#1}  & {\bf{Parameter:}} #3
\\ \end{tabular*}
  {\bf{Input:}} #2  \\
  {\bf{Question:}} #4
  \end{minipage}
  }
  \vspace{2mm}
}

\begin{comment}

\usepackage{amsmath}
\usepackage{amssymb}
\usepackage{amsthm}
\usepackage{graphicx}
\usepackage[linesnumbered,algoruled,boxed,lined]{algorithm2e}
\usepackage{complexity} 
\usepackage{todonotes} 
\usepackage{enumitem}
\setlist{nosep}

\usepackage[symbol]{footmisc}

\newcommand{\yes}{\textsc{Yes}}
\newcommand{\no}{\textsc{No}}

\newtheorem{reduction rule}{Reduction Rule}[section]
\newtheorem{marking-scheme}{Marking Scheme}[section]

% ################ NEW ENVIROMENT STARTS ##########################

\newtheorem{mylemma}{Lemma}%[section]
%[section]
%[section]
%[section]

\newcommand{\parprob}[4]{
  \vspace{2mm}
\noindent\fbox{
  \begin{minipage}{0.96\textwidth}
  \begin{tabular*}{\textwidth}{@{\extracolsep{\fill}}lr} \textsc{#1}  & {\bf{Parameter:}} #3
\\ \end{tabular*}
  {\bf{Input:}} #2  \\
  {\bf{Question:}} #4
  \end{minipage}
  }
  \vspace{2mm}
}

% ################ NEW ENVIROMENT ENDS ##########################

\end{comment}

\title{Balanced Substructures in Bicolored Graphs\footnote{A preliminary version of this work appears in the proceedings of the 48th International Conference on Current Trends in Theory and Practice of Computer Science (SOFSEM 2023).\hfill} \thanks{The second author is supported by SERB CRG grant number CRG/2022/007751.\hfill}}

\author[1]{P.~S.~Ardra\footnote{111914001@smail.iitpkd.ac.in~}}
%\affil[1]{\texttt{111914001@smail.iitpkd.ac.in}}{Indian Institute of Technology Palakkad, Palakkad, India.}

\author[1]{R.~Krithika\footnote{krithika@iitpkd.ac.in~}}
%\affil[1]{\texttt{krithika@iitpkd.ac.in}}{Indian Institute of Technology Palakkad, Palakkad, India.}

\author[2,3]{Saket~Saurabh\footnote{saket@imsc.res.in~}}
%\affil[2]{\texttt{saket@imsc.res.in}}{The Institute of Mathematical Sciences, Chennai, India.\\The Institute of Mathematical Sciences, Chennai, India.}

\author[4]{Roohani Sharma\footnote{rsharma@mpi-inf.mpg.de}}
%\affil[3]{\texttt{rsharma@mpi-inf.mpg.de}}{Max Planck Institute for Informatics,Saarland Informatics Campus, Saarbrucken, Germany.}

\affil[1]{Indian Institute of Technology Palakkad, Palakkad, India.}
\affil[2]{The Institute of Mathematical Sciences, Chennai, India.}
\affil[3]{University of Bergen, Bergen, Norway.}
\affil[4]{Max Planck Institute for Informatics,Saarland Informatics Campus, Saarbrucken, Germany.}

\theoremstyle{plain}
\usepackage{hyperref}
\date{}

\begin{document}
\maketitle
\begin{abstract}
An edge-colored graph is said to be {\em balanced} if it has an equal number of edges of each color. 
Given a graph $G$ whose edges are colored using two colors and a positive integer $k$, the objective in the \textsc{Edge Balanced Connected Subgraph} problem is to determine if $G$ has a balanced connected subgraph containing at least $k$ edges. We first show that this problem is \NP-complete\ and remains so even if the solution is required to be a tree or a path. Then, we focus on the parameterized complexity of \textsc{Edge Balanced Connected Subgraph} and its variants (where the balanced subgraph is required to be a path/tree) with respect to $k$ as the parameter. Towards this, we show that if a graph has a balanced connected subgraph/tree/path of size at least $k$, then it has one of size at least $k$ and at most $f(k)$ where $f$ is a linear function. We use this result combined with dynamic programming algorithms based on {\em color coding} and {\em representative sets} to show that \textsc{Edge Balanced Connected Subgraph} and its variants are \FPT. Further, using polynomial-time reductions to the \textsc{Multilinear Monomial Detection} problem, we give faster randomized \FPT\ algorithms for the problems. In order to describe these reductions, we define a combinatorial object called {\em relaxed-subgraph}. We define this object in such a way that balanced connected subgraphs, trees and paths are relaxed-subgraphs with certain properties. This object is defined in the spirit of branching walks known for the \textsc{Steiner Tree} problem and may be of independent interest.
\end{abstract}

%\begin{keyword}
%red-blue graphs \sep balanced connected subgraphs \sep balanced trees \sep balanced paths \sep parameterized complexity
%\end{keyword}

\section{Introduction}

\label{sec:intro}
Ramsey Theory is a branch of Combinatorics that deals with patterns in large arbitrary structures. 
In the context of edge-colored graphs where each edge is colored with one color from a finite set of colors, a fundamental problem in the area is concerned with the existence of \textit{monochromatic} subgraphs of a specific type. Here, monochromatic means that all edges of the subgraph have the same color. 
For simplicity, we discuss only undirected graphs where each edge is colored either red or blue. 
Such a coloring is called a \textit{red-blue coloring} and a graph associated with a red-blue coloring is referred to as a \textit{red-blue graph}.
In this work, we study questions related to the existence of and finding \emph{balanced} subgraphs instead of monochromatic subgraphs, where by a balanced subgraph we mean one which has an equal number of edges of each color. These problems come under a subarea of Ramsey Theory known as Zero-Sum Ramsey Theory. Here, given a graph whose vertices/edges are assigned weights from a set of integers, one looks for conditions that guarantee the existence of a certain subgraph having total weight zero.
For example, one may ask when is a graph whose all the edges are given weight -1 or 1 guaranteed to have a spanning tree with total weight of its edges 0. This is equivalent to asking when a red-blue graph is guaranteed to have a balanced spanning tree. Necessary and sufficient conditions have been established for complete graphs, triangle-free graphs and maximal planar graphs \cite{CaroHLZ22}.
In the same spirit, one may ask a more general question like when is a red-blue graph $G$ guaranteed to have a balanced connected subgraph of size (number of edges) $k$. An easy necessary condition is that there are at least $k/2$ red edges and at least $k/2$ blue edges in $G$. 
This condition is also sufficient (as we show in the proof of Theorem \ref{thm:bcs-split}) if $G$ is a complete graph (or more generally a split graph).
However, we do not think that such a simple characterization will exist for all graphs. 
This brings us to the following natural algorithmic question concerning balanced connected subgraphs.

\parprob{\textsc{Edge Balanced Connected Subgraph}}{A red-blue graph $G$ and a positive integer $k$}{$k$}{Does $G$ have a balanced connected subgraph of size at least $k$?}

When the subgraph is required to be a tree or a path, the corresponding variants of \textsc{Edge Balanced Connected Subgraph} are called \textsc{Edge Balanced Tree} and \textsc{Edge Balanced Path}, respectively. 
We show that these problems are \NP-complete. 
\begin{itemize}
    \item (Theorems \ref{thm:bcs-npc}, \ref{thm:bt-npc}, \ref{thm:bp-npc}) \textsc{Edge Balanced Connected Subgraph}, \textsc{Edge Balanced Tree} and \textsc{Edge Balanced Path} are \NP-complete.
\end{itemize}
In fact, \textsc{Edge Balanced Connected Subgraph} and \textsc{Edge Balanced Tree} remain \NP-complete\ on bipartite graphs, planar graphs and chordal graphs.
However, \textsc{Edge Balanced Connected Subgraph} is polynomial-time solvable on split graphs (Theorem \ref{thm:bcs-split}).
Yet, \textsc{Edge Balanced Path} is \NP-complete\ even on split graphs.

Note that if a graph has a balanced connected subgraph/tree/path of size at least $k$, then it is not guaranteed that it has one of size equal to $k$. 
This brings us to the following combinatorial question: if a graph has a balanced connected subgraph/tree/path of size at least $k$, then can we show that it has a balanced connected subgraph/tree/path of size equal to $f(k)$ for some function $f$?
We answer these questions in the affirmative and show the existence of such a function which is linear in $k$. 

\begin{itemize}
    \item (Theorems \ref{thm:small-bal-path}, \ref{thm:small-bal-tree}, \ref{thm:small-bal-sub}) If a graph has a balanced connected subgraph/tree of size at least $k$, then it has one of size at least $k$ and at most $3k+3$. Further, if a graph has a balanced path of size at least $k$, then it has a balanced path of size at least $k$ and at most $2k$.
\end{itemize}
Therefore, in order to find a balanced connected subgraph/tree/path of size at least $k$, it suffices to focus on the problem of finding a balanced connected subgraph/tree/path of size exactly $k$.
This leads us to the following problem. %variant of \textsc{Edge Balanced Connected Subgraph}.

\parprob{\textsc{Exact Edge Balanced Connected Subgraph}}{A red-blue graph $G$ and a positive integer $k$}{$k$}{Does $G$ have a balanced connected subgraph of size $k$?}

As before, when the connected subgraph is required to be a tree or a path, the corresponding variants of \textsc{Exact Edge Balanced Connected Subgraph} are called \textsc{Exact Edge Balanced Tree} and \textsc{Exact Edge Balanced Path}, respectively. These problems are also \NP-complete\ and so we study them from the perspective of parameterized complexity. 

In this framework, the key notion is that of a {\em parameterized problem} which is a decision problem where each input instance is associated with a non-negative integer $\ell$ called the {\em parameter}.  A parameterized problem is said to be {\em fixed-parameter tractable} if it can be solved in $f(\ell) n^{\mathcal{O}(1)}$ time for some computable function $f$ where $n$ is the input size. Algorithms with such running times are called \FPT\ algorithms or parameterized algorithms and the complexity class \FPT\ denotes the set of all parameterized problems that are fixed-parameter tractable. For convenience, the running time $f(\ell) n^{\mathcal{O}(1)}$ of a parameterized algorithm where $f$ is a super-polynomial function is written as $\mathcal{O}^*(f(\ell))$. A {\em kernelization} or {\em kernel} for a parameterized problem is a polynomial-time algorithm that transforms an arbitrary instance of the problem to an equivalent instance of the same problem whose size is bounded by some computable function $g$ of the parameter of the original instance. 
A kernel is a {\em polynomial kernel} if $g$ is a polynomial function and we say that the problem admits a polynomial kernel. 
A {\em polynomial parameter transformation} from problem $\Pi_1$ to $\Pi_2$ is a polynomial-time algorithm that transforms an arbitrary instance of $\Pi_1$ to an equivalent instance of $\Pi_2$ such that the parameter of the resulting instance is bounded by some polynomial function of the parameter of the original instance. A parameterized problem is \FPT\ if and only if it is decidable and has a kernel. 
Therefore, the set of parameterized problems that admit polynomial kernels is contained in the class \FPT\ and it is believed that this subset relation is strict. Polynomial parameter transformations are useful in ruling out polynomial kernels. For more information on parameterized complexity, we refer to the book by Cygan et al. \cite{CyganFKLMPPS15}. 

Focussing on \textsc{Exact Edge Balanced Connected Subgraph/ Tree/Path}  with respect to the solution size $k$ as the parameter, we give randomized \FPT\ algorithms for solving the three problems using reductions to the \textsc{Multilinear Monomial Detection} problem (defined in Section \ref{sec:fpt-alg}). 
\begin{itemize}
    \item (Theorems \ref{thm:bcs-fpt-rand}, \ref{thm:bt-fpt-rand}, \ref{thm:bp-fpt-rand}) \textsc{Exact Edge Balanced Connected Subgraph/ Tree/Path} can be solved by a randomized algorithm in $\mathcal{O}^*(2^k)$ time.
\end{itemize}

Many problems reduce to \textsc{Multilinear Monomial Detection} \cite{KoutisW16} and the current fastest algorithm solving it is a randomized algorithm that runs in $\mathcal{O}^*(2^{k})$ time \cite{KoutisW16,Koutis08,Williams09}.
The reductions that we give to \textsc{Multilinear Monomial Detection} use a combinatorial object called {\em relaxed-subgraph}. This object is defined in the spirit of \textit{branching walks} known for the \textsc{Steiner Tree} problem  \cite{Nederlof13}.
We define this object in such a way that balanced connected subgraphs, trees and paths are relaxed-subgraphs with certain properties. Then, using the {\em color-coding} technique \cite{CyganFKLMPPS15,AlonYZ95} and {\em representative sets} \cite{CyganFKLMPPS15,FominLPS16,ShachnaiZ16}, we give deterministic dynamic programming algorithms for the problems. 
\begin{itemize}
    \item (Theorems \ref{thm:bcs-fpt}, \ref{thm:bt-fpt}, \ref{thm:bp-fpt-rep}) \textsc{Exact Edge Balanced Connected Subgraph/ Tree} can be solved in $\mathcal{O}^*((4e)^k)$ time and \textsc{Exact Edge Balanced Path} can be solved in $\mathcal{O}^*(2.619^k)$ time.
\end{itemize}
The method of representative sets is a generic approach for designing efficient dynamic programming based parameterized algorithms that may be viewed as a deterministic-analogue to the color-coding technique. 
Representative sets have been used to obtain algorithms for several parameterized problems \cite{FominLPS16} and our algorithm adds to this list. \\

\smallskip
\noindent 
{\bf Road Map. }
The \NP-completeness of the problems are given in Section \ref{sec:np-c-results}. 
In Section \ref{sec:com-lemma}, the combinatorial results related to the existence of small balanced connected subgraphs, trees and paths are proven.
Section \ref{sec:fpt-alg} discusses the deterministic and randomized algorithms for the problems. 
Section \ref{sec:concl} concludes the work by listing some future directions.\\

\smallskip
\noindent 
{\bf Related Work. }A variant of \textsc{Exact Edge Balanced Connected Subgraph} has recently been studied \cite{Bhore0JMPR19,BhoreJPR19,DartiesGKP19,KobayashiKMSY19,MartinodPDGK21}.
In order to state these results using our terminology, we define the notion of {\em vertex-balanced subgraphs} of {\em vertex-colored graphs}.
A coloring of the vertices of a graph using red and blue colors is called a \textit{red-blue vertex coloring}.

A subgraph of a vertex-colored graph is said to be \textit{vertex-balanced} if it has an equal number of vertices of each color. In the \textsc{Exact Vertex Balanced Connected Subgraph} problem, the interest is in finding a vertex-balanced connected subgraph on $k$ vertices in the given graph associated with a red-blue vertex coloring. This problem is \NP-complete\ and remains so on restricted graph classes like bipartite graphs, planar graphs, chordal graphs, unit disk graphs, outer-string graphs, complete grid graphs, and unit square graphs \cite{Bhore0JMPR19,BhoreJPR19}. 
However, polynomial-time algorithms are known for trees, interval graphs, split graphs, circular-arc graphs and permutation graphs \cite{Bhore0JMPR19,BhoreJPR19}. 
Further, the problem is \NP-complete\ even when the subgraph required is a path \cite{Bhore0JMPR19}. 
\FPT\ algorithms, exact exponential-time algorithms and approximation results for the problem are known from \cite{BhoreJPR19}, \cite{KobayashiKMSY19} and \cite{MartinodPDGK21}.
Observe that finding vertex-balanced connected subgraphs in vertex-colored graphs reduces to finding vertex-balanced trees while the analogous solution in edge-colored graphs may have more complex structures. 

\section{Preliminaries}
Let $[k]$ denote the set $\{1,2,...,k\}$ for $k \in \mathbb{N}_+$. For standard graph-theoretic terminology not stated here, we refer the reader to the book by Diestel \cite{Diestel12}. 
For an undirected graph $G$, $V(G)$ denotes its set of vertices and $E(G)$ denotes its set of edges. 
The {\em size} of a graph is the number of its edges and the {\em order} of a graph is the number of its vertices. An edge between vertices $u$ and $v$ is denoted as $\{u,v\}$ and $u$ and $v$ are called the {\em endpoints} of the edge $\{u,v\}$. Two vertices $u, v$ in $V(G)$ are \emph{adjacent} if $\{u,v\} \in E(G)$. The \emph{neighborhood} of a vertex $v$, denoted by $N_G(v)$, is the set of vertices adjacent to $v$. Similarly, two edges $e, e'$ in $E(G)$ are \emph{adjacent} if they have exactly one common endpoint and the \emph{neighborhood} of an edge $e$, denoted by $N_G(e)$, is the set of edges adjacent to $e$. The {\em degree} of a vertex $v$ is the size of $N_G(v)$. 
A vertex is called an {\em isolated vertex} if its degree is 0. An edge with an endpoint that has degree 1 is called a  {\em pendant edge}. The notion of neighborhood is extended to a set $S \subseteq V(G)$ of vertices by defining $N_G(S)$ as $(\bigcup_{v \in S} N(v)) \setminus S$. We omit the subscript in the notation for neighborhood if the graph under consideration is clear. For a set $F \subseteq E(G)$ of edges, $V(F)$ denotes the set of endpoints of edges in $F$. For a set $S \subseteq V(G) \cup E(G)$, $G - S$ denotes the graph obtained by deleting $S$ from $G$. For a set $S \subseteq V(G)$ of vertices, the subgraph of $G$ induced on $S$ is denoted by $G[S]$. For two sets $S, T \subseteq V(G)$ of vertices, $E(S, T)$ denotes the set of edges with one endpoint in $S$ and the other endpoint in $T$. 

A {\em walk} in $G$ is a sequence $(v_1,\dots,v_k)$ of vertices such that for each $i \in [k-1]$, $\{v_i,v_{i+1}\} \in E(G)$.
A {\em path} $P$ in $G$ is a walk $(v_1,\dots,v_k)$ consisting of distinct vertices. 
For a path $P=(v_1,\dots,v_k)$, the set $\{v_1,\dots,v_k\}$ is denoted by $V(P)$ and the set $\{\{v_i,v_{i+1}\} \mid i \in [k-1]\}$ is denoted by $E(P)$. 
We say that $P$ {\em starts at} $v_1$ and {\em ends at} $v_k$. Vertices $v_1$ and $v_k$ are called the {\em endpoints} of $P$ and edges $\{v_1,v_2\}$ and $\{v_{k-1},v_{k}\}$ are called {\em terminal edges} of $P$.
The {\em length} of a path or walk is the number of edges in it. 
A graph is {\em connected} if there is a path between every pair of its vertices. 
Given a graph $G$, its {\em line graph} $L(G)$ is defined as $V(L(G))=\{e \mid e \in E(G)\}$ and $E(L(G))=\{\{e,e'\} \mid e \text{ and } e' \text{ are adjacent in } G\}$. 
It is well-known that a graph $G$ without isolated vertices is connected if and only if $L(G)$ is connected. A \textit{tree} is an undirected connected acyclic graph. A \textit{clique} is a set of pairwise adjacent vertices and a \textit{complete graph} is a graph whose vertex set is a clique. 
An \textit{independent set} is a set of pairwise non-adjacent vertices. A \textit{split graph} is a graph whose vertex set can be partitioned into a clique and an independent set. 
 
\section{NP-hardness Results}
\label{sec:np-c-results} 

We first show the \NP-hardness of \textsc{Edge Balanced Connected Subgraph} using a polynomial-time reduction from the well-known \NP-hard\ \textsc{Steiner Tree} problem \cite[ND12]{GareyJ79}. In this problem, given a connected graph $G$, a subset $T\subseteq V(G)$ (called {\em terminals}) and a positive integer $k$, the task is to determine if $G$ has a subtree $H$ (called a {\em Steiner tree}) with $T\subseteq V(H)$ and $|E(T)| \leq k$. 
The idea behind the reduction is to color all edges of $G$ of the \textsc{Steiner Tree} instance blue and add exactly $k$ red edges incident to the terminals such that each terminal has at least one red edge incident on it.
Any balanced connected subgraph of size (at least) $k$ of the resulting graph is required to include all the red edges and hence includes all the terminals which in turn corresponds to a Steiner tree of $G$.

\begin{restatable}{mytheorem}{bcsnpc}
\label{thm:bcs-npc}
\textsc{Edge Balanced Connected Subgraph} is \NP-complete. 
\end{restatable}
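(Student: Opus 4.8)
The plan is to give a polynomial-time reduction from \textsc{Steiner Tree} to \textsc{Edge Balanced Connected Subgraph}, following the outline sketched just before the statement. Membership in \NP\ is immediate: given a subgraph $H$, one verifies in polynomial time that $H$ is connected, has at least $k$ edges, and has an equal number of red and blue edges. So the work is entirely in the \NP-hardness.

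For the reduction, let $(G,T,k)$ be an instance of \textsc{Steiner Tree} where $T=\{t_1,\dots,t_p\}$ is the set of terminals. I would construct a red-blue graph $G'$ as follows. First, color every edge of $G$ blue. Then attach red pendant edges incident on the terminals so that (i) each terminal receives at least one red edge, and (ii) the total number of red edges added is exactly $k$. Concretely, for each terminal $t_i$ create a new degree-one vertex $r_i$ and add the red edge $\{t_i, r_i\}$; if $p < k$, add $k-p$ further red pendant edges attached to (say) $t_1$ via fresh degree-one vertices. This requires $p \le k$, which I may assume without loss of generality since a Steiner tree connecting $p$ terminals has at least $p-1$ edges, so instances with $k < p-1$ are trivially \no-instances and can be rejected in polynomial time; the remaining boundary cases can be handled by a constant-size adjustment. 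Set the target size of the balanced subgraph to $2k$. The construction is clearly polynomial-time.

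The correctness argument has two directions. For the forward direction, suppose $G$ has a Steiner tree $H$ with $|E(H)| \le k$ and $T \subseteq V(H)$. Since $H$ is connected and contains every terminal, I can first pad $H$ up to exactly $k$ blue edges (the original graph $G$ is connected, so extra blue edges can be appended while keeping connectivity and staying within $G$), then append all $k$ red pendant edges, which are incident on terminals already in the tree. The result is a connected subgraph with $k$ blue and $k$ red edges, hence balanced of size $2k$. For the reverse direction, suppose $G'$ has a balanced connected subgraph $H'$ of size at least $2k$. The number of red edges in $H'$ is at most $k$ (there are only $k$ red edges in total), so balancedness forces exactly $k$ red and $k$ blue edges, giving size exactly $2k$. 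Because $H'$ uses all $k$ red edges and each red edge is a pendant edge incident on a terminal, connectivity forces every terminal to lie in $H'$; deleting the red pendant edges (and their degree-one endpoints) from $H'$ leaves a connected blue subgraph of $G$ containing all of $T$, whose spanning tree is a Steiner tree with at most $k$ edges.

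The main obstacle is bookkeeping at the boundary between $p$ and $k$: I must ensure that exactly $k$ red edges are added, that every terminal is covered, and that the target $2k$ genuinely forces the solution to use all red edges while leaving room for a $k$-edge blue Steiner structure. The subtle point in the reverse direction is arguing that connectivity of $H'$ together with the pendant structure of the red edges forces \emph{all} terminals into the solution; this is where the choice to make the red edges pendant (so that including a red edge requires including its terminal endpoint) does the essential work. I would also double-check that padding blue edges in the forward direction does not inadvertently require edges outside $G$, which is guaranteed by connectivity of $G$.
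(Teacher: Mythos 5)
Your proposal is correct and takes essentially the same approach as the paper's own proof: a reduction from \textsc{Steiner Tree} with all edges of $G$ colored blue, exactly $k$ red pendant edges attached to terminals (one per terminal, the surplus at $t_1$), target $2k$, and the same two-directional argument in which the $k$ red edges force all terminals into any balanced solution. The only nuance is that padding the Steiner tree up to exactly $k$ blue edges requires the explicit assumption $\vert E(G)\vert \geq k$ (which the paper states as a without-loss-of-generality condition, justified since otherwise the instance is trivially a \yes-instance), not connectivity of $G$ alone as you suggest.
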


\begin{proof}
It is easy to verify that \textsc{Edge Balanced Connected Subgraph} is in \NP. Consider an instance $(G,T,k)$ of \textsc{Steiner Tree}. Let  $T=\{t_1,t_2,\ldots,t_{\ell}\}$ be the terminal vertices. Without loss of generality assume $\ell\leq k$ and $\vert E(G) \vert \geq k$. Now we construct an instance $(H,2k)$ of \textsc{Edge Balanced Connected Subgraph} that is equivalent to $(G,T,k)$.
The graph $H$ is obtained from $G$ as follows: for each $t\in T$ we add a new vertex $t'$ adjacent to $t$. 
Then we add $k-\ell$ new vertices adjacent to $t_1$. 
Formally $V(H)=V(G)\cup \{{t'} \mid t \in T\} \cup \{{t'_{\ell+i}} \mid i\in[k-\ell] \} $ and $E(H)= E(G) \cup \{\{t, t'\} \mid t \in T\} \cup  \{\{t_1, {t'_{\ell+i}}\} \mid i\in[k-\ell] \}$. 
Next we define a red-blue coloring of $E(H)$ as follows: edges in $E(H) \cap E(G)$ are colored blue and edges in $E(H) \setminus E(G)$ are colored red.
Let $T' = V(H) \backslash V(G)$ and $E'= E(H) \setminus E(G)$. 

Suppose $(H,2k)$ is a \yes-instance.  Let $H'$ be a balanced connected subgraph of $H$ with at least $2k$ edges. Then, as $H$ has exactly $k$ red edges (which is the set $E'$), $\vert E(H') \vert=2k$. 
Further, as $V(E')= T \cup T'$ it follows that $T \cup T' \subseteq V(H')$. 
Now consider the subgraph $H''$ of $G$ obtained from $H'$ by deleting vertices in $T'$. Clearly, $H''$ has exactly $k$ edges. 
As every vertex in $T'$ has degree 1 and $N_H(T')\subseteq V(G)$, $H''$ is connected. 
Further, $T\subseteq V(H'')$. 
Now any spanning tree of $H''$ is a Steiner tree with at most $k$ edges. Thus $(G,T,k)$ is a \yes-instance.

Conversely,  suppose $(G,T,k)$ is a \yes-instance. Then let $T^*$ be a subtree of $G$ containing $T$ with $\vert E(T^{*}) \vert \leq k$. Let $T^{**}$ be a subgraph of $G$, containing $T^{*}$ such that $\vert E(T^{**}) \vert =k$. Note that $T^{**}$ is also a subgraph of $H$ and each edge in this subgraph is blue. 
Define another subgraph $H'$ of $H$ containing $T^{**}$ as follows: 
$V(H')=V(T^{**}) \cup T'$ and $E(H') = E(T^{**}) \cup E'$. 
Clearly, $H'$ is a balanced connected subgraph of $H$ of size $2k$. 
Thus, $(H,2k)$ is a \yes-instance.
%\qed
\end{proof}

As the variant of the \textsc{Steiner Tree} problem where a tree on exactly $k$ edges is required is also \NP-complete\, we have the following result.

\begin{restatable}{mytheorem}{btnpc}
\label{thm:bt-npc}
\textsc{Edge Balanced Tree} is \NP-complete. 
\end{restatable}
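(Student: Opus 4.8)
The plan is to adapt the reduction from Theorem~\ref{thm:bcs-npc} almost verbatim, but to start from the variant of \textsc{Steiner Tree} in which the sought subtree is required to have \emph{exactly} $k$ edges rather than at most $k$ edges; this variant is also \NP-complete. Membership in \NP\ is immediate, since given a subgraph one can verify in polynomial time that it is a tree, that it is balanced, and that it has at least $k$ edges.

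Given an instance $(G,T,k)$ of the exactly-$k$ \textsc{Steiner Tree} problem with $T=\{t_1,\dots,t_\ell\}$, I would construct the red-blue graph $H$ exactly as in the proof of Theorem~\ref{thm:bcs-npc}: color every edge of $G$ blue, add a red pendant edge $\{t,t'\}$ at each terminal $t\in T$, and add $k-\ell$ further red pendant edges at $t_1$, so that $H$ has precisely $k$ red edges, each incident to a terminal. The output instance is $(H,2k)$.

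For the forward direction, suppose $H$ has a balanced tree $H'$ of size at least $2k$. Since $H$ has only $k$ red edges, $H'$ has exactly $k$ red and $k$ blue edges and must contain every red edge; as the red edges cover $T\cup T'$ where $T'=V(H)\setminus V(G)$, all terminals lie in $H'$. Every vertex of $T'$ has degree one in $H$ and hence is a leaf of $H'$, so deleting these leaves from the tree $H'$ yields a tree whose edge set is exactly the $k$ blue edges of $H'$; this blue tree lies in $G$ and contains $T$, hence is a Steiner tree with exactly $k$ edges. Conversely, given a Steiner tree $T^\star$ of $G$ with exactly $k$ edges, all of which are blue in $H$, I attach the $k$ red pendant edges; since each is incident to a terminal and $T\subseteq V(T^\star)$, this only adds new leaves and therefore keeps the graph acyclic and connected. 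The result is a tree with $k$ blue and $k$ red edges, i.e.\ a balanced tree of size $2k$, so $(H,2k)$ is a \yes-instance.

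The only real subtlety — and the reason the exactly-$k$ variant of \textsc{Steiner Tree} is used rather than the standard at-most-$k$ version — lies in the backward direction. In the connected-subgraph reduction of Theorem~\ref{thm:bcs-npc}, a Steiner tree of size $j<k$ was padded up to $k$ blue edges by adding arbitrary edges of $G$, which may create cycles; this is harmless when the target is a connected subgraph but fatal when the target must be a tree. Insisting on a Steiner tree of size exactly $k$ removes the need for any padding: the blue part is already a $k$-edge tree, and attaching pendant red edges preserves the tree property. I expect this to be the main (and essentially the only) obstacle, and it is resolved entirely by the choice of the source problem.
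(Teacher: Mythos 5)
Your proposal is correct and matches the paper's approach exactly: the paper proves Theorem~\ref{thm:bt-npc} by the one-line observation that the reduction of Theorem~\ref{thm:bcs-npc} works verbatim when the source problem is the \NP-complete\ exactly-$k$-edges variant of \textsc{Steiner Tree}, which is precisely your construction. Your write-up additionally spells out the two directions and correctly identifies why the exactly-$k$ variant is needed (padding with arbitrary blue edges could create cycles), which is the detail the paper leaves implicit.
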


As the reduction described in Theorem \ref{thm:bcs-npc} is a polynomial parameter transformation, the infeasibility of the existence of polynomial kernels for \textsc{Steiner Tree} parameterized by the solution size (i.e., the size of the tree) \cite{CyganFKLMPPS15,DomLS14} extend to  \textsc{Edge Balanced Connected Subgraph} and \textsc{Edge Balanced Tree} as well.
Further, since \textsc{Steiner Tree} has no subexponential \FPT\ algorithm assuming the Exponential Time Hypothesis, it follows that \textsc{Edge Balanced Connected Subgraph} and \textsc{Edge Balanced Tree} also do not admit subexponential \FPT\ algorithms.
Moreover, the reduction in Theorem \ref{thm:bcs-npc} preserves planarity, bipartiteness and chordality. 
This property along with the \NP-completeness\ of \textsc{Steiner Tree} (and its variant) on bipartite graphs \cite{GareyJ79}, planar graphs \cite{GareyJ77} and chordal graphs \cite{WhiteFP85} imply that \textsc{Edge Balanced Connected Subgraph} and \textsc{Edge Balanced Tree} are \NP-complete\ on planar graphs, chordal graphs and bipartite graphs as well. 

\subsection{Complexity in Split Graphs}
Next, we consider \textsc{Edge Balanced Connected Subgraph} on split graphs.
Let $(G,k)$ be an instance. 
An easy necessary condition for $G$ to have a balanced connected  subgraph of size (at least) $k$ is that there are at least $k/2$ red edges and at least $k/2$ blue edges in $G$. 
We show that this condition is also sufficient if $G$ is a split graph leading to the following result.

\begin{restatable}{mytheorem}{bcssplit}
\label{thm:bcs-split}
\textsc{Edge Balanced Connected Subgraph} is polynomial-time solvable on split graphs.
\end{restatable}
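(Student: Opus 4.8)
The plan is to reduce the problem to a simple counting test and to prove that the obvious necessary condition is also sufficient. Let $R$ and $B$ denote the total numbers of red and blue edges of $G$, and set $a=\lceil k/2\rceil$. Since a balanced subgraph has an even number of edges, a balanced connected subgraph of size at least $k$ must contain at least $a$ red and at least $a$ blue edges; hence $\min(R,B)\ge a$ is necessary. I would show that on split graphs it is also sufficient, so that the algorithm simply computes $R$ and $B$ (which are intrinsic to $G$ and require no split partition) and returns \yes\ iff $\min(R,B)\ge a$; this clearly runs in polynomial time.

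For the sufficiency, fix a partition of $V(G)$ into a clique $C$ and an independent set $I$. The argument rests on three structural observations: (i) every edge has an endpoint in $C$, so all edges lie in the single connected component containing $C$ and the global counts $R,B$ are the relevant ones; (ii) any two vertices of $C$ are adjacent, so any set of clique vertices can be joined into a connected scaffold using clique edges, and every such connector is itself a colored edge that counts towards the budget; and (iii) a cross edge $\{c,w\}$ with $w\in I$ behaves as a pendant leaf, so once $c$ lies in the current subgraph such an edge can be attached or detached without affecting connectivity. I would use (iii) to treat red and blue cross edges as a freely adjustable reservoir, and (ii) to pay for connectivity with edges that simultaneously serve as color contributions.

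Assuming $\min(R,B)\ge a$, I would then construct a connected subgraph with exactly $a$ red and $a$ blue edges as follows. Choose a set $S\subseteq C$ of clique vertices and a connected spanning subgraph of $G[S]$ (a spanning tree of the clique on $S$, optionally with extra clique edges) whose numbers of red and blue connector edges are $\rho\le a$ and $\beta\le a$; then saturate by attaching, as pendant leaves, $a-\rho$ red and $a-\beta$ blue cross edges at the vertices of $S$. The key flexibility is that swapping one clique edge for another along a spanning tree of $G[S]$ changes the number of red connector edges by at most one, so the achievable values of $\rho$ form a contiguous interval; together with (iii) this lets me hit the targets $\rho\le a\le \rho+(\text{red cross edges at }S)$ and the analogous inequality for blue, both of which reduce to $\min(R,B)\ge a$. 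Note that at most $a$ clique vertices are needed to host the chosen red cross edges and at most $a$ for the blue ones, so $|S|\le 2a$ and the connector count $|S|-1$ stays within the budget of $2a$ edges.

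The step I expect to be the main obstacle is coordinating connectivity with the exact color counts. A naive greedy that keeps adding needed-color edges can fail: it may saturate one color on bridging edges and then be unable to reach the remaining edges of the other color without overshooting (a small example on three clique vertices exhibits exactly this trap, even though that instance is a \yes-instance). The fix is global rather than greedy: one must choose $S$ and the split between cross edges and clique connectors with foresight, and argue that whenever many clique vertices are required to collect enough edges of a color, the clique among those vertices simultaneously supplies enough connector edges, so the connectivity cost is always absorbed within the budget. Equivalently, one may start from the whole component and delete edges of an over-represented color while preserving connectivity—here (iii) guarantees that, with the clique present, cross edges are removable leaves—until exactly $a$ edges of each color remain. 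Either route establishes sufficiency and hence the polynomial-time algorithm.
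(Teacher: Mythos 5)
Your algorithm and the threshold test ($\min(R,B)\ge\lceil k/2\rceil$) are exactly the paper's, and your three structural observations about split graphs are all correct. The problem is that the sufficiency direction---the entire mathematical content of the theorem---is never actually proved: both of your proposed routes bottom out in the same claim, which you flag as ``the main obstacle'' and then assert rather than establish. In the global-selection route, the assertion that whenever many clique vertices are needed to collect one color, ``the clique among those vertices simultaneously supplies enough connector edges'' misses the point you yourself raise in observation (ii): connectors are never in short supply (it is a clique), but their \emph{colors} count against the two budgets. The spanning-tree exchange argument only gives a contiguous interval of red-connector counts for a \emph{fixed} $S$; the real difficulty is the interaction between the choice of $S$, the forced $|S|-1$ connector edges and their colors, and which colors of cross pendants are available at $S$. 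For instance, if every red edge is a cross edge and they are spread one per clique vertex, while every blue edge is a clique edge, you are forced to take $|S|\ge a$, which forces at least $a-1$ blue connectors; showing that this always lands inside the blue budget (and that a symmetric squeeze cannot pin both colors at once) is precisely the case analysis you never carry out. In the deletion route, observation (iii) licenses removing \emph{cross} edges only while all clique edges are still present. If the excess of the over-represented color sits on clique edges (e.g., every blue edge is a clique edge and $B>a$), you must delete clique edges; after a few such deletions the clique is no longer intact, (iii) no longer applies, and an over-represented edge can become a non-pendant bridge whose removal strands needed edges on both sides---exactly the trap you described for the additive greedy, now reappearing inside your proposed fix. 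You give no invariant, deletion order, or potential argument that rules this out.

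For comparison, the paper closes this gap with a short explicit construction: fix one clique vertex $v$, split the remaining clique vertices into $X$ (joined to $v$ by red edges) and $Y$ (joined to $v$ by blue edges), and do a four-way case analysis on whether $|X|$ and $|Y|$ are at least $k/2$. In each case the subgraph is built around the star at $v$, and whenever an edge not incident to $v$ is picked up (say a blue edge at $x\in X$), the connector $\{v,x\}$---whose color is known precisely because $x\in X$---is added along with it, so both color counts are controlled exactly while connectivity is never in question. That vertex-anchored bookkeeping is the ``coordination with foresight'' your proposal calls for but does not supply; without it, what you have is a correct algorithm with an unproven correctness claim.
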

\begin{proof}
Let $(G,k)$ be an instance where $G$ is a split graph whose vertex set is partitioned into clique $C$ and independent set $I$.
Let $E_R$ be the set of red edges and $E_B$ be the set of blue edges in $G$. If $\vert E_R \vert < \frac{k}{2}$ or $\vert E_B \vert < \frac{k}{2}$, then $(G,k)$ is a \no-instance. Otherwise, $\vert E_R \vert \geq \frac{k}{2}$ and $\vert E_B \vert \geq \frac{k}{2}$. We show that $(G,k)$ is a \yes-instance.

Let $v$ be a vertex in $C$. Define sets $X =\{x \mid x\in C, \{v,x\} \in E_R\}$ and $Y=\{y \mid y\in C, \{v,y\} \in E_B\}$. We will now construct a balanced connected subgraph $H$ of $G$ in each of the following four cases.
\begin{itemize}
    \item Case 1: $\vert X \vert \geq \frac{k}{2}$ and $\vert Y \vert \geq \frac{k}{2}$\\
    Consider any set $S_X$ of $\frac{k}{2}$ vertices from $X$ and any set $S_Y$ of $\frac{k}{2}$ vertices from $Y$. Then, the subgraph $H$ of $G$ with $V(H)=S_X \cup S_Y \cup \{v\}$ and $E(H)=E(v,S_X) \cup E(v,S_Y)$ is the required balanced connected subgraph.
    
    \item Case 2: $\vert X \vert < \frac{k}{2}$ and $\vert Y \vert < \frac{k}{2}$\\
    Initialize $E'=\emptyset$. Let $\vert X \vert=k_1$ and $\vert Y \vert=k_2$. First, we add all edges in $E(v,X) \cup E(v,Y)$ to $E'$. Observe that $C \subseteq V(E')$. Next, we add $\frac{k}{2}-k_1$ edges from $E_R \setminus E(v,X)$ and $\frac{k}{2}-k_2$ edges from $E_B \setminus E(v,Y)$ to $E'$. Then, the subgraph $H$ of $G$ with $E(H)=E'$ and $V(H)=V(E')$ is the required balanced connected subgraph.
   
    \item Case 3: $\vert X \vert \geq \frac{k}{2}$ and $\vert Y \vert < \frac{k}{2}$\\
First, we add all edges in  $E(v,Y)$ to $E'$. Next, we add edges in $E(Y,I) \cap E_B$ to $E'$ one by one until $|E' \cap E_B| =\frac{k}{2}$ or $E(Y,I) \cap E_B \subseteq  E'$. Similarly, we add edges in $(E(v,I) \cup E(Y,Y)) \cap E_B$ to $E'$ one by one until $|E' \cap E_B| =\frac{k}{2}$ or $(E(v,I) \cup E(Y,Y)) \cap E_B \subseteq  E'$. Then, we add edges in $E(X,Y) \cap E_B$ to $E'$ one by one until $|E' \cap E_B| =\frac{k}{2}$ or $E(X,Y) \cap E_B \subseteq  E'$. When we add such an edge $\{x,y\}$ with $x \in X, y \in Y$, we also add the edge $\{x,v\}$ (which is guaranteed to be in $E_R$) to $E'$. At the end of this procedure, observe that $|E' \cap E_R| \leq \frac{k}{2}$ and $|E' \cap E_B| = \frac{k}{2}$. Finally, we add some set of $\frac{k}{2}-|E' \cap E_R|$ edges from $E(v,X)$ to $E'$. Then, the subgraph $H$ of $G$ with $E(H)=E'$ and $V(H)=V(E')$ is the required balanced connected subgraph.
  
    \item Case 4: $\vert X \vert < \frac{k}{2}$ and $\vert Y \vert  \geq \frac{k}{2}$\\
    The construction of $H$ is similar to the one in Case 3.  
  \end{itemize}  
  As the above mentioned cases are exhaustive and in each of them, we have found a procedure to find a balanced connected subgraph $H$ (which is guaranteed to exist), it follows that \textsc{Edge Balanced Connected Subgraph} is polynomial-time solvable on split graphs.
  %\qed
\end{proof}

Now, we move on to \textsc{Edge Balanced Path} and show that it is \NP-hard\ on split graphs by giving a polynomial-time reduction from \textsc{Longest Path}. 
In the \textsc{Longest Path} problem, given a graph $G$ and a positive integer $k$, the task is to find a path $P$ in $G$ of length $k$. 
It is known that \textsc{Longest Path} is \NP-hard \cite[ND29]{GareyJ79} and remains so on split graphs even when the starting vertex $u_0$ of the path is given as part of the input \cite[GT39]{GareyJ79}. 
The reduction may be viewed as attaching a red path of length $k$ (consisting of new internal vertices) starting from $u_0$ to the graph $G$ (whose edges are colored blue) of the \textsc{Longest Path} instance.
Then, by adding certain additional edges (colored blue), we get a red-blue split graph.

\begin{restatable}{mytheorem}{bpsplit}
\label{thm:bp-npc}
\textsc{Edge Balanced Path} is \NP-complete\ on split graphs.
\end{restatable}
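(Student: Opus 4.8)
The plan is to give a polynomial-time many-one reduction from \textsc{Longest Path} on split graphs with a prescribed starting vertex, which is \NP-hard\ by \cite[GT39]{GareyJ79}, while arguing membership in \NP\ separately (guess a vertex sequence and check in polynomial time that it is a simple path with equally many red and blue edges). Given an instance $(G,u_0,k)$ of \textsc{Longest Path}, where $G$ is a split graph with clique $C$ and independent set $I$ and we ask for a path of length $k$ starting at $u_0$, I would build a red-blue split graph $H$ with target value $2k$ as follows. Color every edge of $G$ blue, add $k$ fresh vertices $r_1,\dots,r_k$ and the red edges $\{u_0,r_1\},\{r_1,r_2\},\dots,\{r_{k-1},r_k\}$ (a red path of length $k$ anchored at $u_0$), and finally add blue edges so that $C^\star:=C\cup\{r_1,\dots,r_{k-1}\}$ becomes a clique, i.e.\ add $\{r_i,r_j\}$ for non-consecutive $i,j\le k-1$ and $\{r_i,c\}$ for all $i\le k-1$ and $c\in C$. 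This takes polynomial time.

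First I would check that $H$ is split via the partition into the clique $C^\star$ and the independent set $I^\star:=I\cup\{r_k\}$: the set $C^\star$ is a clique by construction, while $r_k$ has degree one (its only neighbor is $r_{k-1}\in C^\star$), so $I^\star$ is independent. The crucial structural feature is that the only red edges of $H$ are the $k$ edges of the path $u_0r_1\cdots r_k$; everything else is blue. For the forward direction, if $G$ has a path $u_0=p_0,p_1,\dots,p_k$ of length $k$, then $r_k,r_{k-1},\dots,r_1,u_0,p_1,\dots,p_k$ is a simple path in $H$ with exactly $k$ red and $k$ blue edges, hence a balanced path of size $2k$.

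The heart of the argument, and the step I expect to be the main obstacle, is the reverse direction. Suppose $H$ has a balanced path $P$ of size at least $2k$. Since $P$ is balanced, it has equally many red and blue edges, so at least $k$ of each; but $H$ has only $k$ red edges, so $P$ uses exactly $k$ red and $k$ blue edges, in particular every red edge. Here I would use a degree argument: each internal vertex $r_i$ ($1\le i\le k-1$) is incident to exactly two red edges, and a vertex of a simple path meets at most two path-edges, so in $P$ the two edges at $r_i$ are precisely its red edges. This forces the red path $u_0r_1\cdots r_k$ to appear as a contiguous subpath of $P$ whose two ends are $u_0$ and $r_k$. Because $r_k$ is a pendant vertex, it is an endpoint of $P$ and admits no extension, so all $k$ blue edges of $P$ must hang off $u_0$.

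It then remains to see that these blue edges trace a genuine path of $G$. The available blue neighbors of $u_0$ are exactly $N_G(u_0)$: the only fresh blue edges incident to $u_0$ (present when $u_0\in C$) go to $r_1,\dots,r_{k-1}$, which are already consumed by the red subpath. Hence the blue part of $P$ is a simple path $u_0,q_1,\dots,q_k$ using only edges of $G$, i.e.\ a length-$k$ path in $G$ starting at $u_0$, establishing the equivalence. The two delicate points to get right are ruling out blue extensions from the far end of the red path---handled precisely by making $r_k$ pendant, which is why $r_k$ is placed in the independent set rather than in $C^\star$---and verifying that $u_0$'s only newly added blue incidences land on vertices already used by the red subpath.
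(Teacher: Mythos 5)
Your proof is correct and follows essentially the same route as the paper: a reduction from \textsc{Longest Path} on split graphs with a prescribed start vertex $u_0$, attaching a red path of length $k$ at $u_0$ and adding blue edges to keep the graph split, so that any balanced path of size $2k$ must absorb all $k$ red edges as a contiguous subpath and its blue remainder is a length-$k$ path of $G$ starting at $u_0$. The only cosmetic differences are how splitness is restored (you place all of $r_1,\dots,r_{k-1}$ in the clique, while the paper adds to the clique only the even-indexed new vertices) and that you spell out the degree argument for the reverse direction, which the paper states as an observation.
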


\begin{proof}
It is easy to verify that this problem is in \NP.
Let $(G,u_0,k)$ be an instance of \textsc{Longest Path} where $G$ is a split graph whose vertex set is partitioned into clique $C$ and independent set $I$. 
Suppose $u_0\in C$ (the construction when $u_0\in I$ is similar). We construct an instance $(H,2k)$ of \textsc{Edge Balanced Path} that is equivalent to $(G,u_0,k)$. The graph $H$ is obtained from $G$ as follows. 
Initialize $V(H)=V(G)$ and $E(H)=E(G)$. Add new vertices $\{u_1,u_2,\ldots,u_k\}$ that form the path $P^*=(u_1,u_2,\ldots,u_k)$ to $G$ such that $u_1$ is adjacent to $u_0$. 
Let $S=\{u_{2i} \mid 1\leq i \leq \lfloor \frac{k}{2} \rfloor  \}$. Add edges between every pair of vertices in $S$. Also, make every vertex in $C$ adjacent to every vertex in $S$. 
Observe that $V(H)$ is partitioned into clique $C \cup S$ and independent set $I \cup (V(P^*) \setminus S)$.
Color $E(H)$ such that all edges in $(E(H)\cap E(P^*)) \cup \{u_0,u_1\}$ are red and the remaining edges are blue. 
This completes the construction of $H$.
Let $E_R$ denote the set of red edges and $E_B$ denote the set of blue edges in $H$.
Observe that $|E_R|=k$. 

Suppose $(G,u_0,k)$ is a \yes-instance. Then there exists a path $P=(u_0,x_1,x_2,$ $\ldots,x_{k})$ in $G$ of length $k$ starting with vertex $u_0$. Now, the path $P'=(u_k,\ldots,u_1,$ $u_0,x_1,x_2,\ldots,x_{k})$ is a balanced path in $H$ on $2k$ vertices. Thus $(H,2k)$ is a \yes-instance. 
Conversely, suppose $(H,2k)$ is a \yes-instance. Let $P'$ be a path in $H$ that has $k$ red edges and $k$ blue edges. 
Observe that $P^*$ is a subpath of $P'$ as $|E_R|=k$. Then, the graph obtained by deleting $V(P^*)$ from $P'$ is a path $P$ in $G$ starting at $u_0$. Further, $P$ has $k$ edges. 
Thus $(G,u_0,k)$ is a \yes-instance. 
%\qed
\end{proof}

As \textsc{Longest Path} parameterized by the solution size (i.e., the size of the path) in general graphs does not admit a polynomial kernel \cite{CyganFKLMPPS15,BodlaenderDFH09} and the reduction described (which is adaptable for general graphs) is a polynomial parameter transformation, it follows that \textsc{Edge Balanced Path} does not admit polynomial kernels.
Further, it is known that, assuming the Exponential Time Hypothesis, \textsc{Longest Path} has no subexponential \FPT\ algorithm in general graphs. 
Hence, \textsc{Edge Balanced Path} also does not admit subexponential \FPT\ algorithms.

\section{Small Balanced Paths, Trees and Connected Subgraphs}
\label{sec:com-lemma}
In this section, we prove the combinatorial result that if a graph has a balanced connected subgraph/tree/path of size at least $k$, then it has one of size at least $k$ and at most $f(k)$ where $f$ is a linear function. 
We begin with balanced paths.

\begin{restatable}{mytheorem}{smallbp}
\label{thm:small-bal-path}
Let $G$ be a red-blue graph and $k \geq 2$ be a positive integer. Then, if $G$ has a balanced path of length at least $2k$, then $G$ has a smaller balanced path of length at least $k$.
\end{restatable}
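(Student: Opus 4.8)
The plan is to recode the path as a $\pm 1$ lattice walk and then hunt for a balanced contiguous subpath. Write the given balanced path as $P=(v_0,v_1,\dots,v_n)$ with edges $e_1,\dots,e_n$, where $n\geq 2k$. Since $P$ is balanced, $n$ is even, so I would set $n=2t$ and note $t\geq k$. Assign to each edge $e_i$ a value $c_i=+1$ if $e_i$ is red and $c_i=-1$ if it is blue, and define partial sums $S_0=0$ and $S_i=S_{i-1}+c_i$. Then $S_n=0$ because $P$ is balanced, consecutive sums differ by exactly $1$ in absolute value, and the crucial observation is that the contiguous subpath using edges $e_{i+1},\dots,e_j$ is balanced if and only if $S_i=S_j$. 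Hence the whole task reduces to finding indices $0\le i<j\le n$ with $S_i=S_j$, with $j-i\geq k$, and with $(i,j)\neq(0,n)$ so that the chosen subpath is strictly shorter than $P$.

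The key case distinction I would make is whether the walk $S$ returns to height $0$ strictly between its endpoints. First suppose there is an index $i_0$ with $0<i_0<n$ and $S_{i_0}=0$. Then both $e_1,\dots,e_{i_0}$ and $e_{i_0+1},\dots,e_n$ are balanced subpaths, and their lengths $i_0$ and $n-i_0$ sum to $n$. Consequently the longer of the two has length at least $n/2=t\geq k$, while still being strictly shorter than $n$, and taking it settles this case.

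In the complementary case $S_i\neq 0$ for every internal index $i$, so the walk stays strictly on one side of $0$ throughout $1\le i\le n-1$. Since it leaves $0$ in a single step and must come back to $0=S_n$ in a single step, this forces $S_1=S_{n-1}\in\{+1,-1\}$. Then the subpath $e_2,\dots,e_{n-1}$ is balanced, as it deletes one $+1$ step and one $-1$ step, and its length is $n-2=2t-2$. This length is strictly smaller than $n$, and it is at least $k$ precisely because $2k-2\geq k$ when $k\geq 2$.

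I expect the main obstacle to be closing the two length bounds simultaneously: the chosen subpath must be both at least $k$ and strictly shorter than the original. The internal-return case is comfortable because averaging already delivers length $\geq t\geq k$; the delicate case is the one with no internal return, where the best available move is to shave off only the two terminal edges, and checking that the resulting length $2t-2$ still reaches $k$ is exactly the point where the hypothesis $k\geq 2$ is indispensable.
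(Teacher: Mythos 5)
Your proof is correct and is essentially the paper's own argument: both encode the path as a $\pm 1$ prefix-sum walk and rely on the same two moves, namely splitting at an internal zero of the walk (taking the longer piece, of length at least $n/2 \geq k$) or deleting the two terminal edges when they have opposite colors. The only difference is organizational --- the paper cases first on whether the terminal edges share a color and then proves an internal zero must exist, while you case first on whether an internal zero exists and then deduce the terminal edges have opposite colors --- so the mathematical content is identical.
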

\begin{proof}
Let $E_B$ be the set of blue edges and $E_R$ be the set of red edges in $G$.
Consider a balanced path $P$ in $G$ with at least $2k$ edges. 
If the terminal edges $e$ and $e'$ are of different colors, then delete $e$ and $e'$ to get a smaller path of length at least $k$. 
Otherwise, let $P=(v_1,v_2,\ldots,v_\ell)$ where $e_i$ denotes the edge $\{v_i,v_{i+1}\}$ for each $i \in [\ell-1]$. Without loss of generality, let $e_1, e_{\ell-1} \in E_R$. 
Define the function $h:E(P)\rightarrow \mathbb{N}$ as follows.
\[
h(e_i) = 
\begin{cases} 
1, & \text{if } i =1\\
h(e_{i-1})+1, & \text{if } i >1 \text{ and } e_i \in E_R\\
        h(e_{i-1})-1, & \text{if } i >1 \text{ and } e_i \in E_B
\end{cases}
\]
Let $E'$ be the edges $e$ in $P$ with $h(e)=0$. 
Clearly, $|E'| \geq 1$ as $P$ is balanced and so $h(e_{\ell-1})=0$.
We claim that $|E'|>1$.
Suppose $|E'|=1$. 
Then, as $e_{\ell-1} \in E_R$, $h(e_{\ell-1})=h(e_{\ell-2})+1$ implying that $h(e_{\ell-1})=-1$.
As $h(e_1)=1$ and $h(e_{\ell-1})=-1$, it follows that there is an edge $e_i$ with $i \neq \ell$ and $h(e_i)=0$. This contradicts that $|E'|=1$.
Hence, $|E'|>1$.
Let $e_i$ denote an edge with $i<\ell-1$ and $h(e_i)=0$. 
Then, the subpaths $P_1$ and $P_2$ with $E(P_1)=\{e_1, \ldots, e_i\}$ and $E(P_2)=\{e_{i+1}, \ldots, e_{\ell-1}\}$ are two balanced paths strictly smaller than $P$.
Further, as $|E(P)| \geq 2k$, either $|E(P_1)| \geq k$ or $|E(P_2)| \geq k$.
\end{proof}

Now, we move to the analogous result for balanced trees. 

\begin{restatable}{mytheorem}{smallbt}
\label{thm:small-bal-tree}
Let $G$ be a red-blue graph and $k \geq 2$ be a positive integer. Then, if $G$ has a balanced tree with at least $3k+2$ edges, then $G$ has a smaller balanced tree with at least $k$ edges.
\end{restatable}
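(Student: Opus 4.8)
The plan is to imitate the potential-function argument of Theorem~\ref{thm:small-bal-path}, but carried out on a one-dimensional \emph{linearization} of the tree into nested connected subtrees. Let $T$ be a balanced tree with $n := |E(T)| \ge 3k+2$, and root $T$ at a leaf $r$. Run a depth-first search from $r$ and let $e_1,\dots,e_n$ be the edges listed in the order in which they are first traversed downward. For each $i$ let $S_i$ be the subgraph with edge set $\{e_1,\dots,e_i\}$; since every newly discovered edge attaches to a vertex already present, each $S_i$ is a connected subtree containing $r$, with $S_0=\{r\}$ and $S_n=T$. Let $\Phi_i$ be the imbalance (red edges minus blue edges) of $S_i$. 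Then $\Phi_0=\Phi_n=0$ and $|\Phi_i-\Phi_{i-1}|=1$ for every $i$, exactly mirroring the edge-potential $h$ used along a path. The crucial observation is that whenever $\Phi_i=0$ the subtree $S_i$ is \emph{balanced}, connected, acyclic, and for $i\le n-1$ a \emph{proper} subtree of $T$; hence it suffices to exhibit an index $i\in[k,n-1]$ with $\Phi_i=0$.

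First I would dispose of the easy regime: if $\Phi_i=0$ for some $i\in[k,n-1]$ we are done immediately, and if $\Phi\equiv 0$ then $S_k$ works since $k\le n-1$. The genuine difficulty is when the walk $\Phi$ stays strictly away from $0$ on the whole window $[k,n-1]$. Because a unit-step walk cannot change sign without hitting $0$, in that case $\Phi_i$ keeps a fixed sign (say $\Phi_i\ge 1$) on a long suffix of indices, i.e.\ every sufficiently large prefix subtree has strictly more red edges than blue.

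Here lies the main obstacle, and the reason the constant degrades from $2k$ to $3k+2$. In the path setting one may take the \emph{difference} $P_2$ of two prefixes of equal potential and still obtain a connected balanced subpath; in a tree the difference $S_j\setminus S_i$ of two nested prefixes need not be connected, so we are forced to land on an \emph{actual} zero of $\Phi$ rather than merely on a repeated value. To push such a zero into the window $[k,n-1]$, I would run the depth-first search \emph{adaptively}: whenever the current subtree has imbalance of one sign and its boundary contains an edge of the minority color, extend along that edge, thereby keeping $\Phi$ inside $\{-1,0,1\}$ and returning to $0$ after each pair of favourable steps. The difficulty is confined to the maximal stretches in which the boundary of the current subtree is monochromatic, forcing the search to drift in a single direction while descending into a monochromatic part of $T$. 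Bounding how long such drift can persist before a zero of $\Phi$ is again guaranteed at size at least $k$, and verifying that the slack provided by $n\ge 3k+2$ absorbs the worst-case excursion, is the step I expect to be the most delicate. An alternative route that avoids the adaptive search is to root $T$ at a centroid, decompose it into branches $C_1,\dots,C_d$ meeting only at the centroid (each of size at most $n/2$), and then assemble a connected balanced subtree by taking a union of whole branches through the centroid — using the path-style potential \emph{inside} a single branch to obtain unit granularity on the imbalance — choosing the collection so that the total imbalance is $0$ and the total size falls in $[k,n-1]$; here the main work would be the subset-sum-type combination, again cushioned by the factor-$3$ slack.
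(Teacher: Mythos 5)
Your setup (DFS prefixes of a rooted tree, unit-step imbalance walk $\Phi$ with $\Phi_0=\Phi_n=0$) is fine as far as it goes, but the proposal has a genuine gap: both of your proposed routes stop exactly where the real work lies, and as stated both fail on a concrete example. Take $T$ to be a path consisting of $n/2$ red edges followed by $n/2$ blue edges. Rooted at a leaf, the DFS order of this tree is forced to be the linear order of the path, so $\Phi$ rises monotonically to $\pm n/2$ and returns; its only zeros are at indices $0$ and $n$, so no zero lies in $[k,n-1]$. Your adaptive DFS cannot help here, since along a path the boundary of the current subtree is always a single edge and there is never a choice to make. The centroid variant fails on the same example as stated: the two branches have imbalance $+n/2$ and $-n/2$, so no union of whole branches together with a prefix of one further branch is balanced except the whole tree; one needs partial prefixes of both branches simultaneously, which is outside the scheme you describe, and whose general multi-branch, subset-sum version is precisely what you leave open. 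In both routes the ``most delicate step'' you defer is not a technicality --- it is the theorem.

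The idea you are missing, which the paper's proof rests on, is a dichotomy on the colors of the pendant edges of $T$. If $T$ has two pendant edges of different colors, delete both: the result is a balanced tree with at least $3k$ edges, and we are done (this, or the reduction to Theorem \ref{thm:small-bal-path} when $T$ is a path, is how the example above is dispatched). Otherwise \emph{all} pendant edges of $T$ have the same color, say red, and this global fact is exactly what tames the drift you worry about. The paper then splits $V(T)$ into two sides, each inducing a subtree with at least $k+1$ vertices (a centroid-type split, so each side has at least $k$ edges), and notes that at least one side is balanced (done) or has strictly more red than blue edges (they cannot both be blue-majority, since $T$ is balanced). It grows the red-majority side by adding the remaining edges of $T$ one at a time in breadth-first order, stopping at the first balanced tree. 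If balance were never reached before exhausting $T$, the imbalance, which starts at least $1$ and moves by $\pm 1$ per step, would remain strictly positive throughout; but the last edge added is a deepest edge of the other side, hence a pendant edge of $T$, hence red, so the final imbalance would be at least $2$, contradicting that $T$ is balanced. Thus the process halts at a proper subtree containing the starting side, so with at least $k$ edges. Note how the one-sided pendant condition substitutes for the ``difference of two prefixes'' step that is unavailable in trees: one no longer needs a zero of the walk inside a prescribed window, only a zero strictly before the end, and starting from a piece that already has at least $k$ edges makes any such zero good enough.
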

\begin{proof}
Let $E_B$ be the set of blue edges and $E_R$ be the set of red edges in $G$. 
Consider a balanced tree $T$ in $G$ with at least $3k+2$ edges.
If $T$ is a path, then by Theorem \ref{thm:small-bal-path}, we obtain the desired smaller tree which is a path.
If $T$ has pendant edges $e$ and $e'$ of different colors, then delete $e$ and $e'$ to get a smaller tree on at least $k$ edges. 
Otherwise, without loss of generality, let all pendant edges of $T$ be in $E_R$.
Let $n$ denote $|V(T)|$.
Root $T$ at an arbitrary vertex of degree at least 3.
For a vertex $v \in V(T)$, let $T_v$ denote the subtree of $T$ rooted at $v$.
Let $u$ be a vertex with maximum distance from the root such that $|V(T_u)| > \frac{n}{3}$. 
Let $u_1, \ldots, u_\ell$ be the children of $u$.
Observe that for each $i \in [\ell]$, $|V(T_{u_i})| \leq \frac{n}{3}$.
Let $i$ be the least integer in $[\ell]$ such that $\frac{n}{3} \leq |\underset{1 \leq j \leq i} \bigcup V(T_{u_j})| \leq \frac{2n}{3}$.
Let $S$ denote $\underset{1 \leq j \leq i} \bigcup V(T_{u_j})$ and $R$ denote $V(T) \setminus S$.
As $\frac{n}{3} \leq |S| \leq \frac{2n}{3}$, we have $\frac{n}{3} \leq |R| \leq \frac{2n}{3}$.
Further, since $n \geq 3k+3$, we have $\frac{n}{3} \geq k+1$.
Hence, $k+1 \leq |S|, |R| \leq n-1$. % and $\frac{2n}{3} \leq 2k+2$
Consider the following cases.\\
Case 1: $T[S \cup \{u\}]$ or $T[R]$ is balanced. Then, the lemma holds trivially.\\
Case 2: $T[S \cup \{u\}]$ or $T[R]$ has more edges from $E_R$ than from $E_B$. Suppose $T[R]$ has more edges from $E_R$ than from $E_B$. 
Initialize $T^*$ to be $T[R]$. 
As $T[R]$ has at least $k+1$ vertices (and therefore at least $k$ edges), it has at least $k/2$ edges from $E_R$. 
Add the edges of $T[S \cup \{u\}]$ to $T^*$ in the breadth-first order until $T^*$ becomes balanced.
We claim that $T \neq T^*$.
Suppose $T=T^*$. 
Then, let $e$ be the edge that was added last to $T^*$. 
Clearly, $e$ is a pendant edge and is in $E_R$. 
The edge $e$ was added as $T^*-e$ is not balanced. 
Further, the number of edges in $T^*-e$ from $E_B$ is less than the number of edges from $E_R$. 
Otherwise, the addition process would have terminated earlier.
However, if $|E(T^*-e) \cap E_B|<|E(T^*-e) \cap E_R|$, $e \in E_R$ and $T=T^*$, then it follows that $T$ is not balanced leading to a contradiction.\\
Case 3: $T[S \cup \{u\}]$ and $T[R]$ have lesser edges from $E_R$ than from $E_B$. As $E(T[S \cup \{u\}])$ and $E(T[R])$ partition $E(T)$, this case implies that $T$ has more edges from $E_B$ than from $E_R$ contradicting that $T$ is balanced.
\end{proof}

Finally, we prove the result for balanced connected subgraphs. For this, we use line graphs, vertex-balanced subgraphs and vertex-balanced trees of vertex-colored graphs. 

\begin{restatable}{mytheorem}{smallbcs}
\label{thm:small-bal-sub}
Let $G$ be a red-blue graph and $k \geq 2$ be a positive integer. Then, if $G$ has a balanced connected subgraph with at least $3k+3$ edges, then $G$ has a smaller balanced connected subgraph with at least $k$ edges.
\end{restatable}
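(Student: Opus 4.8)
The plan is to transfer the problem to the line graph, where the edge-colouring of $G$ becomes a vertex-colouring, and then to shrink a vertex-balanced tree by a centroid argument mirroring the proof of Theorem~\ref{thm:small-bal-tree}. Concretely, let $H$ be a balanced connected subgraph of $G$ with $|E(H)| \ge 3k+3$; we may assume $H$ has no isolated vertices. Form the line graph $L(H)$ and colour each of its vertices (which corresponds to an edge of $H$) with the colour of that edge. Since $H$ is connected and has no isolated vertices, $L(H)$ is connected, and since $H$ is balanced, $L(H)$ has equally many red and blue vertices, i.e.\ it is \emph{vertex-balanced}, on $|V(L(H))| = |E(H)| \ge 3k+3$ vertices. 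The point of this translation is the line-graph connectivity fact recorded in the preliminaries: any connected, vertex-balanced subgraph of $L(H)$ on $t$ vertices corresponds to a set of $t$ edges of $G$ that induces a connected, balanced subgraph of $G$ of size $t$. So it suffices to find inside $L(H)$ a connected, vertex-balanced subgraph on at least $k$ and strictly fewer than $|E(H)|$ vertices.

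First I would reduce to trees: taking a spanning tree $\mathcal{T}$ of $L(H)$ yields a vertex-balanced tree (same vertex set, hence the same colour counts) on $n := |E(H)| \ge 3k+3$ vertices, and every subtree of $\mathcal{T}$ is a connected subgraph of $L(H)$. Thus everything reduces to the following vertex-analogue of Theorem~\ref{thm:small-bal-tree}: \emph{a vertex-balanced tree on $n \ge 3k+3$ vertices contains a strictly smaller vertex-balanced subtree on at least $k$ vertices}. Given such a subtree $\mathcal{T}'$, reading its vertices back as edges of $G$ produces the desired smaller balanced connected subgraph $H'$ with $|E(H')| = |V(\mathcal{T}')|$.

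To prove the vertex-balanced tree statement I would follow the case structure of Theorem~\ref{thm:small-bal-tree}. If the tree has two leaves of different colours, delete both to obtain a vertex-balanced tree on $n-2 \ge k$ vertices. If it is a path whose two endpoints share a colour, I would argue with the prefix-discrepancy function as in Theorem~\ref{thm:small-bal-path}, now summing $\pm 1$ over vertices: the monochromatic endpoints force the discrepancy to equal $+1$ just after the first vertex and $-1$ just before the last, so it vanishes at some interior index, splitting the path into two balanced subpaths, the larger of which has at least $n/2 \ge k$ vertices. Otherwise all leaves are monochromatic, say red, and the tree is not a path; then, exactly as in Theorem~\ref{thm:small-bal-tree}, I would root at a branching vertex and find a vertex $u$ together with a union $S$ of subtrees hanging at $u$ so that both $S$ and $R := V(\mathcal{T}) \setminus S$ have between $n/3$ and $2n/3$ vertices. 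Since $n \ge 3k+3$ gives $n/3 \ge k+1$, both $\mathcal{T}[S \cup \{u\}]$ and $\mathcal{T}[R]$ have at least $k+1$ vertices. Their discrepancies are integers summing to $\pm 1$ (the two pieces overlap only in $u$), so if one is $0$ that piece is already balanced and we stop; otherwise they have opposite signs, and I would start from the piece with positive discrepancy and add the vertices of the other piece one at a time in breadth-first order until balance is reached. The result contains an entire piece, hence has at least $k+1$ vertices, and it cannot equal the whole tree: the last vertex added would then be a leaf of $\mathcal{T}$ (every vertex of a hanging part meets the rest of the tree only through $u$), hence red, which forces the discrepancy to have been $-1$ just before the final step — impossible, since it started strictly positive and changes by $\pm 1$, so it would have hit $0$ earlier.

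The main obstacle is precisely this last grow-until-balanced step for trees: one must guarantee that the augmentation terminates \emph{before} the tree is exhausted, and the only available leverage is that, in the hard case, every leaf is monochromatic. Making this rigorous requires the two structural facts used above — that the final breadth-first vertex is a genuine leaf of $\mathcal{T}$, and that a hanging part meets the rest of the tree only at $u$ — so that the last inserted vertex is forced to be a red leaf and the intermediate-value contradiction goes through. Everything else (the line-graph translation, the spanning-tree reduction, and the easy leaf-deletion and path cases) is routine, and the constant $3$ in $3k+3$ is exactly what the $\tfrac{n}{3} \le |S|,|R| \le \tfrac{2n}{3}$ split demands.
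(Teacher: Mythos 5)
Your proposal is correct and follows essentially the same route as the paper: pass to the line graph with the induced red-blue vertex coloring, use the correspondence between vertex-balanced connected subgraphs there and balanced connected subgraphs of $G$, and then shrink a vertex-balanced (spanning) tree via the centroid-style argument of Theorem~\ref{thm:small-bal-tree}. You in fact spell out the details the paper compresses into ``similar to the proof of Theorem~\ref{thm:small-bal-tree}'' --- notably that the two pieces' discrepancies now sum to $\pm 1$ rather than $0$ because they share the vertex $u$, and that leaves of a hanging piece are leaves of the whole tree, so the grow-until-balanced step must terminate before exhausting the tree.
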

\begin{proof}
Define a red-blue coloring on $V(L(G))$ as follows: for each vertex $x$ in $L(G)$ corresponding to a red (blue) edge $\{u,v\}$ in $G$, color $x$ using red (blue). 
Suppose $G$ has a balanced connected subgraph $G'$ with $\ell \geq k$ edges. 
Then, $L(G')$ (which is a subgraph of $L(G)$) is connected and has exactly $\ell$ vertices with half of them colored red and the rest colored blue. 
That is, $L(G)$ has a vertex-balanced connected subgraph of order $\ell$.
Suppose $L(G)$ has a vertex-balanced connected subgraph $H$ on $\ell$ vertices. 
Let $G'$ denote the subgraph of $G$ defined as $E'=E(G')=\{\{u,v\}\in E(G) \mid e \in V(H), e=\{u,v\}\}$ and $V(G')=V(E')$. 
Then, $G'$ is a balanced connected subgraph of $G$ with $\ell$ edges. 

Now, it remains to show that if $L(G)$ has a vertex-balanced connected subgraph (equivalently, a balanced tree $T$) with at least $3k+3$ vertices, then $L(G)$ has a smaller vertex-balanced connected subgraph (equivalently, a balanced tree $T^*$) with at least $k$ vertices. The proof of this claim is similar to the proof of Theorem \ref{thm:small-bal-tree}.
\end{proof}

Observe that due to Theorems \ref{thm:small-bal-path}, \ref{thm:small-bal-tree} and \ref{thm:small-bal-sub}, it suffices to give \FPT\ algorithms for \textsc{Exact Edge Balanced Connected Subgraph/Tree/Path} in order to obtain \FPT\ algorithms for \textsc{Edge Balanced Connected Subgraph/Tree/Path}.

\section{FPT Algorithms}
\label{sec:fpt-alg}
We now describe parameterized algorithms for \textsc{Exact Edge Balanced Connected Subgraph/Tree/Path}. 

\subsection{Randomized Algorithms}
In this section, we show that \textsc{Exact Edge Balanced Connected Subgraph}, \textsc{Exact Edge Balanced Path} and \textsc{Exact Edge Balanced Tree} admit randomized algorithms that runs in $\mathcal{O}^*(2^{k})$ time. We do so by reducing the problems to \textsc{Multilinear Monomial Detection}. 
In order to define this problem, we state some terminology related to polynomials from \cite{KoutisW16}. 
Let $X$ denote a set of variables. 
A {\em monomial} of degree $d$ is a product of $d$ variables from $X$, with multiplication assumed to be commutative. 
A monomial is called {\em multilinear} if no variable appears twice or more in the product. 
A {\em polynomial} $P(X)$ over $\mathbb{Z}_+$ is a linear combination of monomials with coefficients from $\mathbb{Z}_+$. 
A polynomial contains a certain monomial if the monomial appears with a non-zero coefficient in the linear combination that constitutes the polynomial. Polynomials can be represented as {\em arithmetic circuits}  which in turn can be represented as {\em directed acyclic graphs}.  
In the \textsc{Multilinear Monomial Detection} problem, given an arithmetic circuit (represented as a directed acyclic graph) representing a polynomial $P(X)$ over $\mathbb{Z}_+$ and a positive integer $k$, the task is to decide whether $P(X)$ contains a multilinear monomial of degree at most $k$.

\begin{restatable}{myproposition}{mld}
{\em \cite{KoutisW16,Koutis08,Williams09}}
\label{prop:mld}
Let $P(X)$ be a polynomial over $\mathbb{Z}_+$ represented by a circuit. 
The \textsc{Multilinear Monomial Detection} problem for $P(X)$ can be decided in randomized $\mathcal{O}^*(2^k)$ time and polynomial space.
\end{restatable}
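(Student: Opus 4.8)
Since Proposition~\ref{prop:mld} is quoted verbatim from \cite{KoutisW16,Koutis08,Williams09}, I would not reprove it from scratch but rather recall the algebraic-fingerprinting argument underlying those works, as that is what makes both the $\mathcal{O}^*(2^k)$ time and the polynomial-space bound believable. The plan is to evaluate the circuit for $P(X)$ not over $\mathbb{Z}_+$ but over the group algebra $A=\mathbb{F}_{2^\ell}[\mathbb{Z}_2^k]$ of the elementary abelian $2$-group, where $\ell=\Theta(k+\log|P|)$ is chosen large enough to drive the error probability below a constant. Write the standard basis of $A$ as $\{\chi_v : v\in\mathbb{Z}_2^k\}$ with $\chi_u\chi_v=\chi_{u+v}$ and $\chi_0$ the identity. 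The single algebraic fact driving everything is that, over characteristic $2$, $(\chi_0+\chi_v)^2=\chi_0+2\chi_v+\chi_{2v}=2\chi_0+2\chi_v=0$, using $2v=0$ in $\mathbb{Z}_2^k$ and $2=0$ in $\mathbb{F}_{2^\ell}$. (Detection of degree \emph{at most} $k$ reduces to the exact case by padding with fresh dummy variables, so I would treat degree exactly $k$.)

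The substitution I would use replaces each variable $x_i$ by $z_i(\chi_0+\chi_{v_i})$, where $z_i\in\mathbb{F}_{2^\ell}$ and $v_i\in\mathbb{Z}_2^k$ are chosen independently and uniformly at random. Under this substitution a monomial in which some variable occurs at least twice acquires a factor $(\chi_0+\chi_{v_i})^2=0$ and thus vanishes identically; hence only multilinear monomials can contribute. A genuine multilinear monomial $x_{i_1}\cdots x_{i_k}$ maps to $\big(\prod_j z_{i_j}\big)\prod_{j=1}^k(\chi_0+\chi_{v_{i_j}})$, and this product is nonzero precisely when the vectors $v_{i_1},\dots,v_{i_k}$ are linearly independent over $\mathbb{F}_2$, in which case its $2^k$ subset-sum terms are distinct. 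A short counting argument shows that $k$ uniformly random vectors in $\mathbb{Z}_2^k$ are independent with constant probability, so each surviving multilinear term is switched on with constant probability; the random scalars $z_i$, together with a Schwartz--Zippel argument over the large field $\mathbb{F}_{2^\ell}$, then guarantee that the surviving terms do not all cancel. Consequently the evaluated element of $A$ is nonzero with constant probability whenever $P$ has a multilinear monomial of degree $k$ and is identically zero otherwise; the one-sided error is amplified by repetition.

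It remains to bound the resources. Each element of $A$ is a length-$2^k$ vector over $\mathbb{F}_{2^\ell}$; addition is coordinatewise and multiplication is the XOR-convolution over $\mathbb{Z}_2^k$, so a single gate is processed in $2^k\cdot\mathrm{poly}(k,\ell)$ time, giving total time $\mathcal{O}^*(2^k)$ after summing over gates and repetitions. The delicate point, and the step I expect to be the main obstacle to get right, is the claimed \emph{polynomial} space bound: storing even one element of $A$ naively costs $\Theta(2^k\ell)$ bits. This is circumvented exactly as in \cite{Williams09,KoutisW16}, by never materializing a full group-algebra element but instead expressing the quantity to be tested as a sum of $2^k$ contributions indexed by $\mathbb{Z}_2^k$, each being the value of the circuit under a scalar substitution computable in polynomial space; one then streams over the $2^k$ indices, retaining only a running accumulator and the current index. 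This keeps the space polynomial while leaving $2^k$ as the dominant time factor, reproving the stated bound.
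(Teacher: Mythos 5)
Before comparing approaches, note that the paper itself contains no proof of Proposition~\ref{prop:mld}: it is imported as a black box from \cite{KoutisW16,Koutis08,Williams09}. So your sketch can only be measured against the arguments in those cited works. You get the overall architecture right (evaluation over $\mathbb{F}_{2^\ell}[\mathbb{Z}_2^k]$, nilpotency $(\chi_0+\chi_v)^2=0$ killing non-multilinear terms, random vectors with the constant-probability linear-independence bound, random scalars with Schwartz--Zippel, one-sided error), but there are two genuine gaps.

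\textbf{Cancellation.} The per-variable scalars $z_i$ do prevent cancellation between \emph{distinct} multilinear monomials, since distinct supports $S$ carry distinct monomials $\prod_{i\in S}z_i$; but they cannot resurrect a single multilinear monomial whose integer coefficient is even. By the paper's own definition of ``contains'', the polynomial $P=2x_1x_2$, computed by a tiny circuit over $\mathbb{Z}_+$, contains a multilinear monomial of degree $2$; yet under your substitution it evaluates to $2z_1z_2(\chi_0+\chi_{v_1})(\chi_0+\chi_{v_2})=0$ identically in characteristic $2$, for every choice of $z_i$ and $v_i$, so your algorithm answers no on a yes-instance. Schwartz--Zippel over $\mathbb{F}_{2^\ell}$ cannot help once the coefficient is already zero in the coefficient ring. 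Dealing with even coefficients is exactly the delicate part of the cited proofs (and the reason applications of this method, including the ones in this paper, engineer their polynomials so that the monomials to be detected effectively have coefficient $1$); your sketch omits it entirely.

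\textbf{Polynomial space.} The streaming implementation you describe --- $2^k$ scalar substitutions indexed by $\mathbb{Z}_2^k$, each evaluated over $\mathbb{F}_{2^\ell}$ in polynomial space and accumulated --- cannot exist over a field of characteristic $2$. Such a decomposition is precisely a splitting of $\mathbb{F}_{2^\ell}[\mathbb{Z}_2^k]$ into $2^k$ one-dimensional representations, i.e.\ $2^k$ distinct characters $\mathbb{Z}_2^k\rightarrow\mathbb{F}_{2^\ell}^{*}$; but $\mathbb{F}_{2^\ell}^{*}$ has odd order, so the only character is trivial. Equivalently, this group algebra is local and every element $\chi_0+\chi_{v_i}$ you substitute lies in its nilpotent augmentation ideal, so every ring homomorphism to the field sends all your inputs to $0$; and replacing group-algebra elements by field scalars destroys the very nilpotency you call the driving fact. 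What \cite{Koutis08,KoutisW16} actually do is lift the computation to a ring of characteristic $2^{k+O(1)}$ (integers modulo $2^{k+O(1)}$, or a Galois ring with residue field $\mathbb{F}_{2^\ell}$), where the $\pm1$ characters of $\mathbb{Z}_2^k$ do exist: for each $w\in\{0,1\}^k$ the circuit is evaluated on the scalars $x_i\mapsto z_i\bigl(1+(-1)^{\langle v_i,w\rangle}\bigr)$, the results are summed in a streaming fashion over all $w$, and the decision is a $2$-adic divisibility test (terms with repeated variables or dependent vectors contribute sums divisible by $2^{k+1}$, while surviving multilinear terms contribute $2^k$ times a unit). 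This lift is also what legitimizes the $\mathcal{O}^*(2^k)$ time per gate: over $\mathbb{F}_{2^\ell}$ itself a product gate is a naive XOR-convolution costing $4^k$, and the Walsh--Hadamard shortcut again needs a ring where one can divide by $2$. Without this ingredient, neither the correctness nor the space (nor, for general circuits, the time) bound in your final paragraph goes through.
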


Our reductions from \textsc{Exact Edge Balanced Connected Subgraph/ Tree/Path} to \textsc{Multilinear Monomial Detection} crucially use the notions of a \textit{color-preserving homomorphism} (also known as an edge-colored homomorphism in the literature \cite{BrewsterDHQ05}) and {\em relaxed-subgraphs}. 

\begin{restatable}{mydefinition}{colorhom}
Given graphs $G$ and $H$ with red-blue edge colorings $\texttt{col}_G:E(G) \rightarrow \{red,blue\}$ and $\texttt{col}_H:E(H) \rightarrow \{red,blue\}$, a color-preserving homomorphism from $H$ to $G$ is a function $h: V(H) \rightarrow V(G)$ satisfying the following properties.
\begin{itemize}
\item For each pair $u,v \in V(H)$, if $\{u,v\} \in E(H)$, then $\{h(u),h(v)\} \in E(G)$.
\item For each edge $\{u,v\}$ in $H$, $\texttt{col}_H(\{u,v\})=\texttt{col}_G(\{h(u),h(v)\})$.
\end{itemize}
\end{restatable}

\begin{restatable}{mydefinition}{relaxsub}
Given a red-blue graph $G$, a relaxed-subgraph is a pair $S=(H,h)$ where $H$ is a red-blue graph and $h$ is a color-preserving homomorphism from $H$ to $G$.
\end{restatable}

The vertex set of a relaxed-subgraph $S=(H,h)$ is $V(S)=\{h(a) \in V(G) \mid a \in V(H)\}$ and the edge set of $S$ is $\{\{h(a),h(b)\} \in E(G) \mid \{a,b\} \in E(H)\}$.
We treat the vertex and edge sets of a relaxed-subgraph as multi-sets.
The size of $S$ is the number of edges in $H$ (equivalently, the size of $E(S)$).
$S$ is said to be connected if $H$ is connected and $S$ is said to be balanced if $H$ has an equal number of red edges and blue edges.
$S$ is said to be a {\em relaxed-path} if $H$ is a path and a {\em relaxed-tree} if $H$ is a tree.

Next, we have the following observation that states that relaxed-subgraphs with certain specific properties correspond to balanced connected subgraphs, trees and paths.

\begin{restatable}{myobservation}{relaxsubprop}
\label{obs:relaxed-structures}
The following hold for a red-blue graph $G$.
\begin{itemize}
    \item $G$ has a balanced connected subgraph of size $k$ if and only if there is a balanced connected relaxed-subgraph $S$ of size $k$ such that $E(S)$ consists of distinct elements.
    \item $G$ has a balanced path of size $k$ if and only if there is a balanced relaxed-path $(P,h)$ of size $k$ where $h$ is injective.
    \item $G$ has a balanced tree of size $k$ if and only if there is a balanced relaxed-tree $(T,h)$ of size $k$ where $h$ is injective.
\end{itemize}
\end{restatable}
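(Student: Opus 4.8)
The plan is to prove all three biconditionals with a single two-directional template, since each asserts that a balanced substructure of $G$ corresponds to a relaxed-substructure satisfying an injectivity-type condition (pairwise distinct edges for the connected case, an injective $h$ for the path and tree cases). In every case the forward direction is an embedding argument and the reverse direction is an image argument.

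For the forward direction, suppose $G$ contains the relevant balanced object, say a balanced connected subgraph $G'$ (respectively a balanced path $P'$ or balanced tree $T'$) of size $k$. I would take $H := G'$ (respectively $P'$, $T'$) viewed as an abstract red-blue graph carrying the coloring inherited from $G$, and let $h$ be the inclusion map sending each vertex of $H$ to the corresponding vertex of $G$. Color-preservation is immediate, because every edge of $H$ is literally an edge of $G$ with the same color, and $h$ is injective as an identity embedding. Hence $(H,h)$ is a balanced connected relaxed-subgraph (respectively relaxed-path, relaxed-tree) of size $k$, and injectivity of $h$ forces the elements of $E(S)$ to be pairwise distinct, so the required side conditions hold automatically.

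For the reverse direction, suppose $S=(H,h)$ is a balanced relaxed-substructure of size $k$ satisfying the stated condition, and define $G'$ by $V(G')=V(S)$ and $E(G')=E(S)$. The heart of the argument is that the distinctness of $E(S)$ (respectively the injectivity of $h$) makes $h$ restrict to a color-preserving bijection between $E(H)$ and $E(S)$: if two distinct edges of $H$ mapped to the same edge of $G$, the multiset $E(S)$ would contain a repetition. Consequently $G'$ has exactly $k$ edges and inherits the red/blue balance of $H$, giving a balanced subgraph of the correct size. Connectivity of $G'$ follows from connectivity of $H$, since a homomorphic image of a connected graph is connected: any walk in $H$ maps edge by edge to a walk in $G'$. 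For the path case, injectivity of $h$ makes the image $h(a_1),\dots,h(a_{k+1})$ of the path $P=(a_1,\dots,a_{k+1})$ a walk on pairwise distinct vertices, hence a genuine path of length $k$; for the tree case, $G'$ is connected and, by the bijection above, has $|V(H)|$ vertices and $|E(H)|=|V(H)|-1$ edges, so it is acyclic and therefore a tree.

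The routine verifications (size, balance, connectivity, and---for trees---acyclicity) are each short, so I do not expect a single hard step; rather, the one point that needs genuine care is the bookkeeping between the \emph{multiset} semantics of $V(S)$ and $E(S)$ and the ordinary set semantics of an honest subgraph of $G$. The injectivity and distinctness hypotheses are precisely what prevent two edges of $H$ from collapsing onto one edge of $G$ (which would shrink the size or destroy balance), and formulating this collapse-avoidance cleanly is the only subtlety in the proof.
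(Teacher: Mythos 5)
Your proposal is correct and follows essentially the same route as the paper's proof: the forward direction via the identity embedding, and the reverse direction by taking the image subgraph $G'$ with $V(G')=V(S)$, $E(G')=E(S)$ and using distinctness/injectivity to transfer size, balance, and connectivity. In fact you spell out the path and tree cases (which the paper dismisses as ``similar'') and make explicit the edge-bijection point that the paper uses implicitly, but this is elaboration, not a different argument.
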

\begin{proof}
The forward direction of the three claims are easy to verify. 
If $G$ has a balanced connected subgraph/tree/path $Q$ of size $k$, then $S=(Q,h)$ where $h: V(Q) \rightarrow V(G)$ is the identity map is the required balanced connected relaxed-subgraph/tree/path of size $k$.
Consider the converse of the first claim.
Suppose there is a balanced connected relaxed-subgraph $S=(H,h)$ of size $k$ such that $E(S)$ consists of distinct elements.
Consider the subgraph $G'$ of $G$ with $V(G')=V(S)$ and $E(G')=E(S)$.
Then, $G'$ is connected as $H$ is connected and $|E(G')|=|E(S)|=k$ as $E(S)$ consists of distinct elements. Further, since $H$ is balanced, $G'$ is balanced.
Hence, $G$ has a balanced connected subgraph of size $k$.
The proofs of other two claims are similar.
\end{proof}

Now, we are ready to describe the randomized algorithms for \textsc{Exact Edge Balanced Connected Subgraph/Tree/Path} based on Observation \ref{obs:relaxed-structures}. 
First, we consider \textsc{Exact Edge Balanced Connected Subgraph}.

\begin{restatable}{mytheorem}{bcsmld}
\label{thm:bcs-fpt-rand}
\textsc{Exact Edge Balanced Connected Subgraph} admits a randomized $\mathcal{O}^*(2^{k})$-time algorithm.
\end{restatable}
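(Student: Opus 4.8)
The plan is to reduce \textsc{Exact Edge Balanced Connected Subgraph} to \textsc{Multilinear Monomial Detection} and then invoke Proposition \ref{prop:mld}. By Observation \ref{obs:relaxed-structures}, it suffices to decide whether $G$ admits a balanced connected relaxed-subgraph of size $k$ whose edge multiset $E(S)$ consists of distinct elements. The first step is to argue that we may restrict attention to relaxed-\emph{trees}, so that the search space matches a branching-walk recurrence. Given any balanced connected subgraph $G'$ of size $k$, fix a spanning tree of $G'$ and, for each remaining non-tree edge $\{u,v\}$, attach to the copy of $u$ a fresh pendant edge whose leaf maps to $v$ and which carries the color of $\{u,v\}$. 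This produces a tree $H$ of size $k$ together with a color-preserving homomorphism $h$ to $G$ that is a bijection between $E(H)$ and $E(G')$; since $h$ preserves colors, $H$ is balanced exactly when $G'$ is. Thus tree-shaped relaxed-subgraphs already capture every connected subgraph, and the distinctness of $E(S)$ will be enforced by multilinearity.

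Concretely, if $k$ is odd we report \no\ (a balanced subgraph has even size). Otherwise, for each edge $e\in E(G)$ introduce a variable $x_e$ and build, by dynamic programming, a polynomial $P(X)$ over $\mathbb{Z}_+$ whose monomials correspond to rooted branching walks (equivalently, homomorphic images of rooted trees $H$) of size $k$ carrying exactly $k/2$ red edges. Writing $c_{vu}=1$ if $\{v,u\}$ is red and $0$ otherwise, and letting $B_v[s,r]$ collect the monomials of branching walks rooted at $v$ with $s$ edges of which $r$ are red, I would set $B_v[0,0]=1$, $B_v[0,r]=0$ for $r>0$, and for $s\ge 1$
\[
B_v[s,r]=\sum_{u\in N(v)}\ \sum_{\substack{s_1+s_2=s-1\\ r_1+r_2=r-c_{vu}}} x_{\{v,u\}}\cdot B_u[s_1,r_1]\cdot B_v[s_2,r_2],
\]
and finally $P(X)=\sum_{v\in V(G)} B_v[k,k/2]$. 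Each branching walk contributes the product of $x_{h(\cdot)}$ over its edges, a monomial of degree exactly $k$; it is multilinear precisely when the $k$ image edges are pairwise distinct, i.e.\ when the associated relaxed-tree has distinct edges and hence, by Observation \ref{obs:relaxed-structures}, corresponds to a genuine balanced connected subgraph of size $k$. Conversely, the pendant-edge construction above shows that every balanced connected subgraph of size $k$ yields such a multilinear monomial, whose coefficient is a positive integer (a count of branching walks) and therefore does not vanish. Since every monomial of $P$ has degree exactly $k$, detecting a multilinear monomial of degree at most $k$ is the same as detecting one of degree $k$, so $P(X)$ contains a multilinear monomial iff $(G,k)$ is a \yes-instance.

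To finish, I would verify that $P$ admits a small arithmetic circuit over $\mathbb{Z}_+$: the table has $\mathcal{O}(n k^2)$ entries, each a sum of $\mathcal{O}(|E(G)|\,k^2)$ products, and the dependence is acyclic because $B_v[s,\cdot]$ refers only to entries with first coordinate $s_1,s_2\le s-1<s$. The circuit is built in polynomial time and uses only addition and multiplication, so it represents a polynomial over $\mathbb{Z}_+$ as required. Running the algorithm of Proposition \ref{prop:mld} on this circuit then decides the instance in randomized $\mathcal{O}^*(2^k)$ time (and polynomial space). The main obstacle I expect is the forward direction of the correspondence rather than the routine recurrence: one must be sure that connected subgraphs containing cycles are faithfully represented by tree-shaped branching walks whose edge images are all distinct (the spanning-tree-plus-pendant-edges argument), and that the $\mathbb{Z}_+$ formulation introduces no cancellation, so that the positive branching-walk counts guarantee the target multilinear monomial genuinely survives in $P$.
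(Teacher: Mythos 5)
Your proposal is correct, and it reaches the same destination as the paper's proof --- a polynomial-time reduction to \textsc{Multilinear Monomial Detection} over edge variables $x_e$, with multilinearity enforcing that the $k$ image edges are distinct, followed by an appeal to Proposition \ref{prop:mld} --- but the internal organization is genuinely different. The paper's dynamic program is \emph{edge-indexed}: it builds polynomials $P_j(e,r,b)$ whose monomials correspond to arbitrary connected relaxed-subgraphs (not just trees) containing $e$ in their image, with a recurrence that either attaches a pendant edge or glues two relaxed-subgraphs at a shared endpoint of $e$. Because its states range over general connected relaxed-subgraphs, cyclic witnesses are absorbed directly inside the backward induction (when $H-z$ stays connected, the deleted cycle edge is re-inserted via the pendant-type term $x_e P_{j-1}(e',r-1,b)$), so the paper never needs a structural lemma converting subgraphs to trees. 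You instead restrict the search space to \emph{vertex-rooted relaxed-trees} with a standard Nederlof-style branching-walk recurrence $B_v[s,r]$, and you pay for this restriction with a separate key lemma: the spanning-tree-plus-pendant-edges lifting, which shows every balanced connected subgraph of size $k$ is the edge-bijective image of some relaxed-tree of size $k$. That lemma is correct (the pendant leaves map to existing vertices of $G'$, colors are inherited, and $E(H)\to E(G')$ is a bijection), and it is exactly the ingredient your formulation needs that the paper's does not; conversely, your forward direction is immediate because a relaxed-tree is in particular a balanced connected relaxed-subgraph with distinct edges, so Observation \ref{obs:relaxed-structures} applies verbatim. What each approach buys: yours has a smaller and more standard table ($\mathcal{O}(nk^2)$ vertex-indexed entries, tracking only the red count since $b=s-r$) and isolates the combinatorial content in a clean, reusable lemma; the paper's avoids that lemma, treats the witness structures uniformly through the relaxed-subgraph abstraction, and its edge-indexed states mirror the line-graph viewpoint used elsewhere in the paper. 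Both yield polynomial-size circuits over $\mathbb{Z}_+$ (so no cancellation, as you correctly stress) and the same randomized $\mathcal{O}^*(2^k)$ running time.
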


\begin{proof}
Consider an instance $(G,k)$. 
Let $E_R$ denote the set of red edges and $E_B$ denote the set of blue edges in $G$.
In order to obtain an instance of \textsc{Multilinear Monomial Detection} that is equivalent to $(G,k)$, we will define a polynomial $P$ over the variable set $\{x_e \mid e \in E(G)\}$ satisfying the following properties.
\begin{itemize}
    \item For each balanced connected relaxed-subgraph $S=(H,h)$ of size $k$ there exists a monomial in $P$ that corresponds to $S$.  
    We say that a monomial $M$ corresponds to $S$, if $M=\underset{e \in E(S)}\prod x_e$.
    \item Each multilinear monomial in $P$ corresponds to some balanced connected relaxed-subgraph $S$ of size $k$ where $E(S)$ has distinct elements. 
\end{itemize}

If $P$ is such a polynomial, then from Observation \ref{obs:relaxed-structures}, $G$ has a balanced connected subgraph of size $k$ if and only if $P$ has a multilinear monomial of degree $k$. This way, after the construction of $P$, we reduce the problem to {\sc Multilinear Monomial Detection} and use Proposition \ref{prop:mld}.
In order to construct $P$, we first construct polynomials $P_j(e,r,b)$ for each $e \in E(G)$, $j \in [k]$ and $0 \leq r,b \leq \frac{k}{2}$ with $r+b \geq 1$.
Monomials of $P_j(e,r,b)$ will correspond to connected relaxed-subgraphs $S=(H,h)$ of size $j$ such that $H$ has $r$ red edges, $b$ blue edges and $e \in E(S)$.
The construction of $P_j(e,r,b)$ is as follows. For an edge $e \in E(G)$, 

\noindent $P_1(e,1,0) = 
\begin{cases} 
x_e, & \text{if } e \in E_R\\
0 & \text{otherwise.}
\end{cases}$
and 
$
P_1(e,0,1) = 
\begin{cases} 
x_e, & \text{if } e \in E_B\\
0 & \text{otherwise.}
\end{cases}
$. 

\noindent Also, $P_j(e,r,b)=0$ if $j \neq r+b$. Now, if $e \in E_R$ and $r+b >1$, then we have

\begin{align*}
P_j(e,r,b) &= \underset{\substack{e' \in N_G(e), \ell <j\\r'+r''=r, b'+b''=b}} \sum P_{\ell}(e',r',b') P_{j-\ell}(e,r'',b'') + \underset{e' \in N_G(e)} \sum x_e P_{j-1}(e',r-1,b)
\end{align*}

\noindent and if $e \in E_B$ and $r+b >1$, then we have
\begin{align*}
P_j(e,r,b) &= \underset{\substack{e' \in N_G(e), \ell<j\\r'+r''=r, b'+b''=b}} \sum P_{\ell}(e',r',b') P_{j-\ell}(e,r'',b'') + \underset{e' \in N_G(e)} \sum x_e P_{j-1}(e',r,b-1).
\end{align*}

We now show that every multilinear monomial of $P_j(e,r,b)$ corresponds to a connected relaxed-subgraph $S=(H,h)$ of size $j$ such that $H$ has $r$ red edges and $b$ blue edges with $E(S)$ consisting of distinct elements where $e \in E(S)$. 
We prove this claim by induction on $j$. The base case is easy to verify. 
Consider the induction step. 
Suppose $e=\{u,v\} \in E_R$ (the other case is symmetric). 
Let $M$ be a multilinear monomial of $P_j(e,r,b)$ where $j>1$.
\begin{itemize}
\item Case 1: $M=x_e M'$ where $M'$ is a multilinear monomial of $P_{j-1}(e',r-1,b)$ such that $e' \in N_G(e)$.
Let $e'=\{v,w\}$.
By induction, $M'$ corresponds to a connected relaxed-subgraph $S'=(H',h')$ of size $j-1$ such that $e' \in E(S')$. 
Further, $H'$ has $r-1$ red edges and $b$ blue edges.
Also, $v \in V(h'(H'))$.
Let $z=h'^{-1}(v)$.
Observe that $z$ is well-defined due to the multilinearity of $M$. 
Note that $E(S')$ consists of distinct elements and $e \notin E(S')$. 
Let $H$ denote the graph obtained from $H'$  by adding a new vertex $z'$ adjacent to $z$ with the edge $\{z,z'\}$ colored red.
Let $h: V(H) \rightarrow V(G)$ denote the homomorphism obtained from $h'$ by extending its domain to include $z'$ and setting $h(z')=u$. 
Then, $S=(H,h)$ is a connected relaxed-subgraph that $M$ corresponds to.
\item Case 2: $M=M_1 M_2$ where $M_1$ is a multilinear monomial of $P_{j_1}(e',r',b')$ and $M_2$ is a multilinear monomial of $P_{j_2}(e,r'',b'')$ such that $e' \in N_G(e)$, $j_1,j_2 <j$, $r'' \leq r$ and $b'' \leq b$.
Let $e'=\{v,w\}$.
By induction, $M_1$ corresponds to a connected relaxed-subgraph $S_1=(H_1,h_1)$ of size $j_1$ such that $e' \in E(S_1)$. 
Similarly, $M_2$ corresponds to a connected relaxed-subgraph $S_2=(H_2,h_2)$ of size $j_2$ such that $e \in E(S_2)$. 
Further, $H_1$ has $r'$ red edges, $b'$ blue edges and $H_2$ has $r''$ red edges and $b''$ blue edges.
Also, $v \in V(h_1(H_1)) \cap V(h_2(H_2))$ and $E(S_1) \cap E(S_2)=\emptyset$.
Without loss of generality, assume that $V(H_1) \cap V(H_2)=\emptyset$ as this can be achieved by a renaming procedure.
Let $z_1=h_1^{-1}(v)$ and $z_2=h_2^{-1}(v)$.
Observe that $z_1$ and $z_2$ are well-defined due to the multilinearity of $M_1$ and $M_2$.
Now, rename $z_1$ in $S_1$ and $z_2$ in $S_2$ as $z$.
Let $H$ denote the graph with vertex set $V(H_1) \cup V(H_2)$ and edge set $E(H_1) \cup E(H_2)$.
Observe that $H$ is a connected graph. 
Let $h: V(H) \rightarrow V(G)$ denote the homomorphism obtained from $h_1$ and $h_2$ by extending the domain to the union of the domains of $h_1$ and $h_2$. Then, $S=(H,h)$ is a connected relaxed-subgraph that $M$ corresponds to.
\end{itemize}

We next show that if there is a connected relaxed-subgraph $S=(H,h)$ of size $j$ with $r$ red edges, $b$ blue edges and such that $e=\{u,v\} \in E(S)$, then there is a monomial of $P_j(e,r,b)$ that corresponds to it. We show this again using induction on $j$. Suppose $e\in E_R$ (the other case is symmetric). The base case is trivial. 
Consider the induction step ($j \geq  2$). Let $a=h^{-1}(u)$, $b=h^{-1}(v)$ and $z$ denote the edge $\{a,b\}$ of $H$.
\begin{itemize}
    \item Case 1: $H-z$ is connected.
Then, $S'=(H-z,h)$ is a connected relaxed-subgraph of size $j-1$ with $r-1$ red edges and $b$ blue edges and contains an edge $e' \in N_G(e)$.
By induction, there is a monomial $M'$ corresponding to $S'$ in $P_{j-1}(e',r-1,b)$. 
Then, the monomial $M=x_e M'$ which is in $P_j(e,r,b)$ corresponds to $S$.
\item Case 2: $H-z$ is disconnected.
Then $H$ has two components $H_a$ (containing $a$) and $H_b$ (containing $b$).
Without loss of generality let $H_a$ have at least one edge.
Let $H'_b$ denote the subgraph of $H$ obtained from $H_b$ by adding the vertex $a$ and edge $\{a,b\}$. 
Let $j_1$ and $j_2$ denote the number of edges in $H_a$ and $H'_b$, respectively.
Let $r'$ and $r''$ be the number of red edges in $H_a$ and $H'_b$, respectively.
Similarly, let $b'$ and $b''$ be the number of blue edges in $H_a$ and $H'_b$, respectively.
Then $j=j_1 +j_2, r=r'+r'', b=b'+b''$.
Let $h_a$ and $h_b$ denote the color-preserving homomorphism obtained from $h$ by restricting the domain to $V(H_a)$ and $V(H'_b)$, respectively. 
Now, $S_1=(H_a,h_a)$ and $S_2=(H'_b,h_b)$ are connected relaxed-subgraphs with the following properties. \\
(i) The size of $S_1$ is $j_1$ and $S_1$ has $r'$ red edges, $b'$ blue edges and there is an edge $e'\in E(S_1)$ incident on $h_a(a)$. \\
(ii) The size of $S_2$ is $j_2$ and $S_2$  has $r''$ red edges, $b''$ blue edges and $e \in E(S_2)$.

By induction, there is a monomial $M_1$ in $P_{j_1}(e',r',b')$ that corresponds to $S_1$ and there is a monomial $M_2$ in $P_{j_2}(e,r'',b'')$ that corresponds to $S_2$. 
Then, the monomial $M_1M_2$ which is in $P_j(e,r,b)$ corresponds to $S$.
\end{itemize}
Finally, let $P= \underset{e \in E(G)} \sum P_k(e,\frac{k}{2},\frac{k}{2})$.
Every monomial in $P$ has degree $k$. Then from the arguments above, $P$ is the desired polynomial.
To compute $P$ we need to compute $P_j(e,r,b)$ for each $j,r,b \in [k]$ and each edge $e \in E(G)$.
As these polynomials can be represented as a 
polynomial-sized arithmetic circuit, the reduction runs in polynomial time.
\end{proof}

Next, we move on to \textsc{Exact Edge Balanced Tree}. 

\begin{restatable}{mytheorem}{btmld}
\label{thm:bt-fpt-rand}
\textsc{Exact Edge Balanced Tree} admits a randomized $\mathcal{O}^*(2^{k})$-time algorithm.
\end{restatable}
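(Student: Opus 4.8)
The plan is to mirror the structure of the proof of Theorem~\ref{thm:bcs-fpt-rand} almost verbatim, specializing the recurrence so that the relaxed-subgraphs produced are \emph{relaxed-trees} rather than arbitrary connected relaxed-subgraphs. As in that proof, I would define, for each edge $e \in E(G)$, each $j \in [k]$, and each pair $0 \le r,b \le k/2$ with $r+b \ge 1$, a polynomial $P_j(e,r,b)$ over the variable set $\{x_e \mid e \in E(G)\}$ whose monomials correspond to connected relaxed-trees $S=(H,h)$ of size $j$ in which $H$ has $r$ red edges, $b$ blue edges, and $e \in E(S)$. The goal is that (i) every balanced relaxed-tree of size $k$ contributes a monomial to the final polynomial $P = \sum_{e \in E(G)} P_k(e,k/2,k/2)$, and (ii) every \emph{multilinear} monomial of $P$ corresponds to a relaxed-tree $S$ with $E(S)$ consisting of distinct elements, which by Observation~\ref{obs:relaxed-structures} is exactly a balanced tree of size $k$ in $G$. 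Given such a $P$, the result follows by invoking Proposition~\ref{prop:mld}.

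The key modification is in the recurrence that glues two relaxed-subgraphs together. In the connected-subgraph case, two pieces $S_1$ and $S_2$ sharing the vertex $h^{-1}(v)$ are merged by identifying that single vertex, which can create a cycle in $H$ and is therefore fine for connected subgraphs but not for trees. To keep $H$ acyclic I would instead build the tree by rooting it at the edge $e$ and only attaching subtrees that are genuinely disjoint below the current vertex. Concretely, the base cases $P_1(e,1,0)$ and $P_1(e,0,1)$ are unchanged, and the ``pendant-edge'' term $\sum_{e' \in N_G(e)} x_e\, P_{j-1}(e',r-1,b)$ (resp.\ with $b-1$) is retained, since appending a pendant edge preserves treeness. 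For the binary product term I would take
\begin{align*}
P_j(e,r,b) = \underset{\substack{e' \in N_G(e),\ \ell < j\\ r'+r''=r,\ b'+b''=b}}{\sum} P_{\ell}(e',r',b')\, P_{j-\ell}(e,r'',b'')
\end{align*}
exactly as before, but the correctness argument now exploits that the two pieces are attached at a common \emph{endpoint} of $e'$ and $e$ and, in the reverse direction, that removing the root edge $z$ from a tree $H$ leaves at most two subtrees, one on each side of $z$. The absence of cycles guarantees that the ``disconnection'' analysis in Case~2 of the reverse direction splits $H$ into exactly two subtrees $H_a$ and $H_b$, so the same induction goes through with ``connected relaxed-subgraph'' replaced by ``connected relaxed-tree'' throughout.

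The two induction arguments (forward: every multilinear monomial yields a relaxed-tree with distinct edges; reverse: every relaxed-tree yields a monomial) are structurally identical to those in Theorem~\ref{thm:bcs-fpt-rand}, with the single extra invariant that $H$ is acyclic being maintained at each construction step. In the forward direction, when Case~1 attaches a pendant edge $\{z,z'\}$ with $z'$ new, acyclicity is immediate; in Case~2, merging $H_1$ and $H_2$ at the single identified vertex $z$ keeps the union acyclic precisely because $H_1$ and $H_2$ are disjoint trees joined at one point. I expect the main obstacle to be verifying that the recurrence neither over- nor under-counts tree structures: one must check that the product term, which identifies a shared endpoint, really does enumerate each relaxed-tree (via its rooted decomposition at $e$) and that multilinearity of the resulting monomial forces $E(S)$ to have distinct elements so that Observation~\ref{obs:relaxed-structures} applies to give an honest tree rather than a tree-shaped homomorphic image with repeated edges. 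Finally, as before, all $P_j(e,r,b)$ are computable as a polynomial-sized arithmetic circuit over $O(|E(G)| \cdot k^3)$ entries, so the reduction runs in polynomial time and Proposition~\ref{prop:mld} yields the claimed $\mathcal{O}^*(2^k)$ randomized running time.
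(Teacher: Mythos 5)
There is a genuine gap, and it lies exactly where you suspected trouble: in the certificate extracted from a multilinear monomial. You work over the edge variables $\{x_e \mid e \in E(G)\}$, so multilinearity of a monomial only tells you that $E(S)$ consists of distinct elements; it tells you nothing about the homomorphism $h$ being injective on vertices. But Observation~\ref{obs:relaxed-structures} characterizes balanced trees via relaxed-trees $(T,h)$ with $h$ \emph{injective} --- edge-distinctness is the right certificate only for the connected-subgraph case. A relaxed-tree whose edge images are distinct but whose vertex images collide maps onto a connected subgraph with $k$ distinct edges that can contain cycles. Concretely, let $G$ be a $4$-cycle with two red and two blue edges and $k=4$: the path $H$ on five vertices that wraps once around the cycle is a balanced relaxed-tree of size $4$ with four distinct edge images, so your polynomial $P=\sum_{e} P_4(e,2,2)$ contains a multilinear monomial of degree $4$, yet $G$ has no tree with $4$ edges at all, and your algorithm answers \yes\ on a \no-instance. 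Note that your forward-direction invariant (gluing two trees at one identified vertex yields a tree) concerns only the abstract graph $H$, which was never in doubt; it does not repair the certificate, since the cycle lives in the image $h(H)$, not in $H$.

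The paper's proof avoids this by switching to vertex variables $\{y_v \mid v \in V(G)\}$ for the tree (and path) problems: multilinearity then forces $h$ to be injective, so the image is isomorphic to $H$ and hence an honest tree, and the target degree becomes $k+1$ rather than $k$. This change also forces a recurrence different from the one you copied: since each factor already carries the variables of all vertices it touches, the product term decomposes a tree at the edge $e=\{u,v\}$ into one subtree attached at $u$ (through an edge $e'$ with $u \in V(e')$) and one attached at $v$ (through an edge $e''$ with $v \in V(e'')$), contributing terms of the form $P_{\ell}(e',r',b')\,P_{j-1-\ell}(e'',r-1-r',b-b')$ with no additional variable for $e$, while the pendant terms multiply by $y_u$ or $y_v$. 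Your plan of reusing the connected-subgraph recurrence, with one factor indexed by $e$ itself and the acyclicity invariant added, cannot be patched by bookkeeping alone; the variable set itself has to change.
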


\begin{proof}
Consider an instance $(G,k)$. 
Let $E_R$ denote the set of red edges and $E_B$ denote the set of blue edges in $G$.
In order to obtain an instance of \textsc{Multilinear Monomial Detection} that is equivalent to $(G,k)$, we will define a polynomial $P$ over the variable set $\{y_v : v \in V(G)\}$ satisfying the following properties.
\begin{itemize}
    \item For each balanced relaxed-tree $X=(T,h)$ of size $k$ there exists a monomial in $P$ that corresponds to $X$.  
    We say that a monomial $M$ corresponds to $X$, if $M=\underset{v \in V(X)}\prod y_v$.
    \item For each multilinear monomial $M$ in $P$, there is a balanced relaxed-tree $X=(Q,h)$ of size $k$ where $h$ is injective that $M$ corresponds to.
\end{itemize}

If $P$ is such a polynomial, then from Observation \ref{obs:relaxed-structures}, $G$ has a balanced tree of size $k$ if and only if $P$ has a multilinear monomial of degree $k+1$. This way, after the construction of $P$, we reduce the problem to {\sc Multilinear Monomial Detection} and use Proposition \ref{prop:mld}. 
In order to construct $P$, we first construct polynomials $P_j(e,r,b)$ for each $e \in E(G)$, $j \in [k]$ and $0 \leq r,b \leq \frac{k}{2}$ with $r+b \geq 1$.
Monomials of $P_j(e,r,b)$ will correspond to relaxed-trees $X=(T,h)$ of size $j$ such that $T$ has $r$ red edges, $b$ blue edges and $e \in E(X)$.
The construction of $P_j(e,r,b)$ is as follows. For an edge $e \in E(G)$,  we have $P_1(e,1,0) = 
\begin{cases} 
y_u y_v, & \text{if } e=\{u,v\} \in E_R\\
0 & \text{otherwise.}
\end{cases}$
and similarly 
$
P_1(e,0,1) = 
\begin{cases} 
y_u y_v, & \text{if } e=\{u,v\} \in E_B\\
0 & \text{otherwise.}
\end{cases}
$. 

\noindent Also, $P_j(e,r,b)=0$ if $j \neq r+b$. Now, for $j >1$, if $e=\{u,v\} \in E_R$, then
\begin{multline*}
P_j(e,r,b) = \underset{\substack{e',e'' \in N_G(e)\\ u \in V(e'), v \in V(e'')\\ r'<r, b' \leq b, \ell<j}} \sum P_{\ell}(e',r',b') P_{j-1-\ell}(e'',r-1-r',b-b')\\
+ \underset{\substack{e' \in N_G(e)\\v \in V(e')}} \sum y_u P_{j-1}(e',r-1,b)+\underset{\substack{e' \in N_G(e)\\u \in V(e')}} \sum y_v P_{j-1}(e',r-1,b). 
\end{multline*}

\noindent If $e \in E_B$, then we have
\begin{multline*}
P_j(e,r,b) = \underset{\substack{e',e'' \in N_G(e)\\ u \in V(e'), v \in V(e'')\\ r'<r, b' \leq b, \ell<j}} \sum P_{\ell}(e',r',b') P_{j-1-\ell}(e'',r-r',b-1-b') \\
+ \underset{\substack{e' \in N_G(e)\\v \in V(e')}} \sum y_u P_{j-1}(e',r,b-1)+\underset{\substack{e' \in N_G(e)\\u \in V(e')}} \sum y_v P_{j-1}(e',r,b-1).
\end{multline*}

We now show that every multilinear monomial of $P_j(e,r,b)$ corresponds to a relaxed-tree $X=(T,h)$ of size $j$ such that the $T$ has $r$ red edges and $b$ blue edges where $h$ is injective.
We prove this claim by induction on $j$. The base case is easy to verify. 
Consider the induction step. 
Suppose $e=\{u,v\} \in E_R$ (the other case is symmetric). 
Let $M$ be a multilinear monomial of $P_j(e,r,b)$ where $j>1$.
\begin{itemize}
\item Case 1: $M=y_u M'$ where $M'$ is a multilinear monomial of $P_{j-1}(e',r-1,b)$ such that $e' \in N_G(e)$.
Let $e'=\{v,w\}$.
By induction, $M'$ corresponds to a relaxed-tree $X'=(T',h')$ of size $j-1$ such that $e' \in E(X')$. 
Further, $T'$ has $r-1$ red edges and $b$ blue edges.
Also, $v \in V(h'(T'))$ (by definition) and $u \notin V(h'(T'))$ (due to the multilinearity of $M$).
Let $z'$ denote $h'(v)$.
Let $T$ denote the tree obtained from $T'$ by adding new vertex $z$ adjacent to $z'$ and the edge $\{z,z'\}$ colored red.
Let $h$ be the homomorphism obtained from $h'$ by extending the domain to include $z$ with $h(z)=u$.
Then, $X=(T,h)$ is the relaxed-tree with the desired properties that $M$ corresponds to.
\item Case 2: $M=M_1 M_2$ where $M_1$ is a multilinear monomial of $P_{j_1}(e',r',b')$ and $M_2$ is a multilinear monomial of $P_{j_2}(e'',r'',b'')$ such that $e',e'' \in N_G(e)$, $j_1,j_2 <j$, $r'' \leq r-1$ and $b'' \leq b$.
Let $e'=\{u,t\}$ and $e''=\{v,w\}$.
By induction, $M_1$ corresponds to a relaxed-tree $X_1=(T_1,h_1)$ of size $j_1$ such that $e' \in E(X_1)$. 
Further, $T_1$ has $r'$ red edges and $b'$ blue edges.
Also, $u \in V(h_1(T_1))$.
Similarly, $M_2$ corresponds to a relaxed-tree $X_2=(T_2,h_2)$ of size $j_2$ such that $e \in E(X_2)$. 
Further, $T_2$ has $r''$ red edges and $b''$ blue edges.
Also, $v \in V(h_2(X_2))$.
Without loss of generality assume  $V(T_1) \cap V(T_2)=\emptyset$ (one can achieve this property by renaming vertices). 
Let $T$ denote the graph with vertex set $V(T_1) \cup V(T_2)$ and edge set $E(T_1) \cup E(T_2) \cup \{h_1^{-1}(u),h_2^{-1}(v)\}$.
Let $h: V(T) \rightarrow V(G)$ denote the homomorphism obtained from $h_1$ and $h_2$ by extending the domain to the union of the domains of $h_1$ and $h_2$.
Due to the multilinearity of $M$, it follows that $h$ is injective.
Then, $X=(T,h)$ is the required relaxed-tree that $M$ corresponds to.
\end{itemize}

We next show that if there is a relaxed-tree $X=(T,h)$ of size $j$ with $r$ red edges, $b$ blue edges and such that $e=\{u,v\} \in E(X)$, then there is a monomial of $P_j(e,r,b)$ that corresponds to it. We show this again using induction on $j$. Suppose $e\in E_R$ (the other case is symmetric). The base case is trivial. 
Consider the induction step ($j \geq  2$). Let $a=h^{-1}(u)$, $b=h^{-1}(v)$ and $z$ denote the edge $\{a,b\}$ of $T$.
Then $T-z$ has two components $T_a$ (containing $a$) and $T_b$ (containing $b$).
Let $h_a$ and $h_b$ denote the homomorphism $h$ restricted to domains $V(T_a)$ and $V(T_b)$, respectively.
Without loss of generality let $T_a$ have at least one edge.
Suppose $T_b$ is edgeless. 
By induction, let $M'$ be the monomial corresponding to $(T_a,h_a)$.
Then, $y_v M'$ is the monomial in $P_j(e,r,b)$ corresponding to $(T,h)$.
Next, consider the case when $T_b$ is not edgeless.
Let $j_1$ and $j_2$ denote the number of edges in $T_a$ and $T_b$, respectively.
Let $r'$ and $r''$ be the number of red edges in $T_a$ and $T_b$, respectively.
Similarly, let $b'$ and $b''$ be the number of blue edges in $T_a$ and $T_b$, respectively.
Then $j-1=j_1 +j_2, r-1=r'+r'', b=b'+b''$.
Now, $X_1=(T_a,h_a)$ and $X_2=(T_b,h_b)$ are relaxed-trees with the following properties.
\begin{itemize}
\item The size of $X_1$ is $j_1$ and $T_a$ has $r'$ red edges, $b'$ blue edges and there is an edge $e'\in E(X_1)$ incident on $h_a(a)$.
\item The size of $X_2$ is $j_2$ and $T_b$ has $r''$ red edges, $b''$ blue edges and there is an edge $e''\in E(X_2)$ incident on $h_b(b)$.
\end{itemize}
By induction, there is a monomial $M_1$ in $P_{j_1}(e',r',b')$ that corresponds to $X_1$ and there is a monomial $M_2$ in $P_{j_2}(e'',r'',b'')$ that corresponds to $X_2$.
Then, the monomial $M_1M_2$ which is in $P_j(e,r,b)$ corresponds to $X$.

Finally, let $P= \underset{e \in E(G)} \sum P_k(e,\frac{k}{2},\frac{k}{2})$.
Every monomial in $P$ has degree $k+1$. Then from the arguments above, $P$ is the desired polynomial.
To compute $P$ we need to compute $P_j(e,r,b)$ for each $j,r,b \in [k]$ and each edge $e \in E(G)$.
As these polynomials can be represented as a 
polynomial-sized arithmetic circuit, the reduction runs in polynomial time.  
\end{proof}

Finally, we consider \textsc{Exact Edge Balanced Path}.

\begin{restatable}{mytheorem}{bpmld}
\label{thm:bp-fpt-rand}
\textsc{Exact Edge Balanced Path} admits a randomized $\mathcal{O}^*(2^{k})$-time algorithm.
\end{restatable}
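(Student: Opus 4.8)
The plan is to mirror the two preceding theorems (\ref{thm:bcs-fpt-rand} and \ref{thm:bt-fpt-rand}): reduce \textsc{Exact Edge Balanced Path} to \textsc{Multilinear Monomial Detection} by building a polynomial $P$ over the vertex variables $\{y_v \mid v \in V(G)\}$ whose degree-$(k+1)$ multilinear monomials correspond exactly to balanced relaxed-paths $(P,h)$ of size $k$ with $h$ injective. By Observation \ref{obs:relaxed-structures}, such an injective relaxed-path is precisely a balanced path of size $k$ in $G$, so detecting a multilinear monomial of degree $k+1$ decides the instance; then Proposition \ref{prop:mld} gives the $\mathcal{O}^*(2^k)$ running time.

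First I would define auxiliary polynomials $P_j(e,r,b)$ whose monomials encode relaxed-paths of size $j$ with $r$ red edges and $b$ blue edges, \emph{having $e$ as a terminal edge} (this is the key structural difference from the tree case, where an internal edge could split into two subtrees). The base case sets $P_1(e,1,0)=y_uy_v$ when $e=\{u,v\}\in E_R$ and $P_1(e,0,1)=y_uy_v$ when $e\in E_B$, and $P_j(e,r,b)=0$ unless $j=r+b$. For the recurrence, since a path grows by appending a single edge at one end, I would write, for $e=\{u,v\}\in E_R$ with $v$ the shared endpoint,
\begin{align*}
P_j(e,r,b) = \underset{\substack{e' \in N_G(e)\\ u \in V(e')}}{\sum} y_v\, P_{j-1}(e',r-1,b),
\end{align*}
and the symmetric expression when $e\in E_B$ using $P_{j-1}(e',r,b-1)$. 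Note that unlike the tree recurrence, there is no product term joining two subpaths: a path is extended only by prepending a new terminal edge, and the multilinearity of the product $y_v\cdot M'$ is exactly what forces $h$ to be injective (the new endpoint $u$ cannot already appear).

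The correctness proof proceeds by induction on $j$ in both directions, as in the previous two theorems. In the forward direction, a multilinear monomial $M=y_v M'$ of $P_j(e,r,b)$, where $M'$ is multilinear in $P_{j-1}(e',r-1,b)$, yields by induction a relaxed-path $(P',h')$ of size $j-1$ with terminal edge $e'$; I extend it by adding a new terminal vertex mapped to $u$, and multilinearity of $M$ guarantees $u$ is fresh, so $h$ stays injective and $(P,h)$ is a relaxed-path of size $j$ with terminal edge $e$. Conversely, given an injective relaxed-path of size $j$ with terminal edge $e=\{u,v\}$, deleting the terminal edge at $u$ leaves a relaxed-path of size $j-1$ whose terminal edge $e'$ is incident to $v$; by induction its monomial $M'$ lies in $P_{j-1}(e',r-1,b)$, and $y_vM'$ is the desired monomial of $P_j(e,r,b)$. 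Finally I set $P=\sum_{e\in E(G)} P_k(e,\frac{k}{2},\frac{k}{2})$; every monomial has degree $k+1$, and $P$ has a multilinear monomial of degree $k+1$ iff $G$ has a balanced path of size $k$. Since the $P_j(e,r,b)$ for $j,r,b\in[k]$ and $e\in E(G)$ form a polynomial-sized arithmetic circuit, the reduction runs in polynomial time and the bound follows.

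The main subtlety I expect is getting the endpoint bookkeeping right: the recurrence must append the new edge at the correct shared endpoint (here $v$, since $e=\{u,v\}$ is anchored with $u$ the new free endpoint), and the variable $y_v$ must not be re-multiplied if $v$ already carries a variable from $M'$. This is precisely why the construction records $e$ as a \emph{terminal} edge rather than an arbitrary edge of the relaxed-path, and why the summation ranges over neighbors $e'$ incident to the \emph{shared} vertex; a careless formulation would either double-count the shared vertex's variable (breaking the degree-$(k+1)$ count) or permit branching (producing relaxed-trees rather than relaxed-paths). Everything else is a routine adaptation of the inductive arguments already carried out for Theorems \ref{thm:bcs-fpt-rand} and \ref{thm:bt-fpt-rand}.
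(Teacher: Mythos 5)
Your high-level strategy (encode injective balanced relaxed-paths as multilinear monomials over vertex variables, invoke Observation \ref{obs:relaxed-structures} and Proposition \ref{prop:mld}) is the same as the paper's, but your dynamic-programming state is genuinely broken --- ironically on exactly the point you flag as ``the main subtlety.'' Your table $P_j(e,r,b)$ records only that $e$ is \emph{a} terminal edge of the relaxed-path; it does not record which endpoint of $e$ is the degree-one endpoint of the path. For $j-1\geq 2$, a path can have $e'=\{v,w\}$ as a terminal edge while $v$ is its \emph{internal} vertex (the path is $w-v-\cdots$), so when your recurrence glues the new edge $e=\{u,v\}$ onto such a subpath at the shared vertex $v$, it creates a degree-three vertex: the result is a relaxed-tree, not a relaxed-path, yet the monomial $y_uM'$ is perfectly multilinear. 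Hence the soundness claim ``every multilinear monomial of $P_j(e,r,b)$ corresponds to a relaxed-path'' fails at the induction step, and this is an algorithmic defect, not just a proof defect. Concretely, take $G=K_{1,4}$ with center $c$, red edges $\{c,l_1\},\{c,l_2\}$, blue edges $\{c,l_3\},\{c,l_4\}$, and $k=4$: unrolling your recurrence (each new leaf attached at the shared vertex $c$) puts the multilinear monomial $y_cy_{l_1}y_{l_2}y_{l_3}y_{l_4}$ of degree $k+1$ into $\sum_e P_4(e,2,2)$, so the algorithm answers yes, while $G$ has no path with more than two edges. Separately, your formula and prose contradict each other: you declare $v$ the shared endpoint and $u$ the fresh one, yet the recurrence sums over $e'$ with $u\in V(e')$ and multiplies by $y_v$; since the subpath already contains the shared vertex, multiplying by its variable would produce a square and the monomial could never be multilinear --- the factor must be the variable of the \emph{new} endpoint.

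The repair is what the paper actually does: index the table by the free endpoint \emph{vertex} rather than by an (undirected) terminal edge. The paper's $P_j(v,r,b)$ collects relaxed-paths of size $j$ whose first vertex maps to $v$, with base case $P_0(v,0,0)=y_v$ and recurrence $P_j(v,r,b)=\sum_{u\in N(v),\,\{u,v\}\in E_R} y_v\,P_{j-1}(u,r-1,b)+\sum_{u\in N(v),\,\{u,v\}\in E_B} y_v\,P_{j-1}(u,r,b-1)$. Because the state itself names the endpoint, every extension happens at a genuine path endpoint and no branching monomials arise; multilinearity of $y_vM'$ then correctly certifies that $v$ is fresh. Equivalently, you could keep an edge-indexed table but make the state a \emph{directed} terminal edge (terminal edge together with its designated free endpoint); without that extra bit of information your recurrence cannot be made sound.
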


\begin{proof}
Consider an instance $(G,k)$. 
Let $E_R$ denote the set of red edges and $E_B$ denote the set of blue edges in $G$.
We will define a polynomial $P$ over the variable set $\{y_v : v \in V(G)\}$ satisfying the following properties.
\begin{itemize}
    \item For each balanced relaxed-path $X$ of size $k$ there exists a monomial in $P$ that corresponds to $X$.  
    We say that the monomial $M$ corresponds to $X$, if $M=\underset{v \in V(X)}\prod y_v$.
    \item For each multilinear monomial $M$ in $P$, there is a balanced relaxed-path $X=(Q,h)$ of size $k$ where $h$ is injective that $M$ corresponds to.
\end{itemize}

If $P$ is such a polynomial,
then from Observation \ref{obs:relaxed-structures}, $G$ has a balanced path of size $k$ if and only if $P$ has a multilinear monomial of degree $k+1$. This way, after the construction of $P$, we reduce the problem to the {\sc Multilinear Monomial Detection} and use Proposition \ref{prop:mld}. 
In order to construct $P$, we first construct polynomials $P_j(v,r,b)$ for each $v \in V(G)$, $j \in [k]$ and $0 \leq r,b \leq \frac{k}{2}$.
Monomials of $P_j(v,r,b)$ will correspond to relaxed-paths $X=(Q,h)$ of size $j$ such that $Q$ has $r$ red edges and $b$ blue edges and $v=h(u_1)$, where $u_1$ is the first vertex of $Q$.
The construction of $P_j(v,r,b)$ is as follows. 
For a vertex $v \in V(G)$, $P_0(v,0,0) = y_v$. Also, $P_j(v,r,b)=0$ if $j \neq r+b$. For non-negative integers $r,b$ with $r+b \geq 1$ and a positive integer $j$, we have 

\begin{align*}
P_j(v,r,b) = \underset{\substack{u \in N(v) \\ \{u,v\} \in E_R}} \sum y_v P_{j-1}(u,r-1,b) + \underset{\substack{u \in N(v) \\ \{u,v\} \in E_B}} \sum y_v P_{j-1}(u,r,b-1).
\end{align*} 

We now show that every multilinear monomial of $P_j(v,r,b)$ corresponds to a relaxed-path $(Q,h)$ of size $j$ such that $Q$ has $r$ red edges, $b$ blue edges and $v$ is the vertex of $G$ onto which the first vertex of $Q$ is mapped under $h$ and $h$ is injective.
We show this using induction on $j$. The base case is easy to verify. 
Consider the induction step. 
Let $M$ be a multilinear monomial of $P_j(v,r,b)$.
Then, $M=y_v M'$ where $M'$ is a multilinear monomial of $P_{j-1}(u,r-1,b)$.
By induction hypothesis, $M'$ corresponds to a relaxed-path $X'=(Q',h')$ such that $h'$ is injective with $h'(z')=u$ where $z'$ is the starting vertex of $Q'$. Further, $Q'$ has $r-1$ red edges, $b$ blue edges or $r$ red edges, $b-1$ blue edges. 
Let $Q$ denote the path obtained from $Q'$ by adding a new vertex $z$ adjacent to $z'$ with the edge $\{z,z'\}$ colored red (in the former case) or blue (in the latter case).
Let $h$ be the homomorphism obtained from $h'$ by extending the domain to include $z$ with $h(z)=v$.
Clearly, $h$ is injective as $h'$ is injective and $v \notin V(Q')$ (due to the multilinearity of $M$).
Now, $y_v M'$ corresponds to the relaxed-path $X=(Q,h)$.
And, $X$ is of size $j$ such that $Q$ has $r$ red edges, $b$ blue edges and $v$ is the vertex of $G$ onto which the first vertex of $Q$ is mapped under $h$.

We next show that if there is a relaxed-path $X=(Q,h)$ of size $j$ with $r$ red edges, $b$ blue edges such that $v$ is the first vertex of this relaxed path, then there is a monomial of $P_j(v,r,b)$ that corresponds to it. We show this again using induction on $j$. 
Let $\{a,b\}$ be the first edge of $Q$. 
Let $v=h(a)$ and $u=h(b)$. 
Suppose $\{v,u\} \in E_R$ (the other case is symmetric). 
Consider $X'=(Q',h')$ where $V(Q')=V(Q) \setminus \{v\}$, $E(Q')=E(Q) \setminus \{\{a,b\}\}$ and $h'$ is $h$ with domain restricted to $V(Q')$.
Then, $X'$ is a relaxed-path with $r-1$ red and $b$ blue edges. 
By induction there is a monomial $M$ corresponding to it in the $P_{j-1}(u,r-1,b)$. Thus, $y_v M$ corresponds to $X$.

Finally, let $P= \underset{v \in V(G)} \sum P_k(v,\frac{k}{2},\frac{k}{2})$.
Every monomial in $P$ has degree $k+1$. Then from the arguments above, $P$ is the desired polynomial. As the polynomial $P$ can be represented as a polynomial-sized arithmetic circuit, the reduction runs in polynomial time.  
\end{proof}

\subsection{Deterministic Algorithms}
We first describe deterministic algorithms for \textsc{Exact Edge Balanced Connected Subgraph} and \textsc{Exact Edge Balanced Tree} using the color-coding technique \cite{CyganFKLMPPS15,AlonYZ95,NaorSS95}. 

Consider an instance $(G,k)$ of \textsc{Exact Edge Balanced Connected Subgraph/Tree}. Let $m=\vert E(G) \vert$ and $n=|V(G)|$.
Let $E_R$ denote the set of red edges and $E_B$ denote the set of blue edges in $G$.
Let $\sigma : E(G) \rightarrow [k]$ be a coloring of edges of $G$ and $\tau : V(G) \rightarrow [k+1]$ be a coloring of vertices of $G$.
We now define $L$-edge-colorful subgraphs and $L$-vertex-colorful subgraphs.
For $L\subseteq [k+1]$, a subgraph $H \subseteq G$ is said to be $L$-edge-colorful if $\vert E(H)\vert = \vert L \vert$, $\bigcup_{e \in E(H)} \sigma(e) =L$ and for every $e \neq e' \in E(H)$, $\sigma(e) \neq \sigma(e')$. 
Similarly, $H$ is said to be $L$-vertex-colorful if $\vert V(H)\vert = \vert L \vert$, $\bigcup_{v \in V(H)} \tau(v) =L$ and for every $u \neq v \in V(H)$, $\tau(u) \neq \tau(v)$. We describe dynamic programming algorithms to find a $[k]$-edge-colorful balanced connected subgraph and a $[k+1]$-vertex-colorful balanced tree in $G$ (if they exist) in $\mathcal{O}^*(4^{k})$ time.

\begin{mylemma}
\label{lem:colorful-bcs}
If a red-blue graph $G$ associated with coloring $\sigma : E(G) \rightarrow [k]$ has a $[k]$-colorful balanced connected subgraph then such a subgraph can be obtained in $\mathcal{O}^*(4^{k})$ time. 
\end{mylemma}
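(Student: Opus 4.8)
The plan is to set up a dynamic programming table that encodes connectivity through an \emph{anchor edge}, encodes the colorfulness constraint through a subset $L \subseteq [k]$ of used $\sigma$-colors, and encodes balance through a red-edge count. Concretely, for an edge $e \in E(G)$, a subset $L \subseteq [k]$ with $\sigma(e) \in L$, and an integer $0 \le r \le |L|$, let $D[e,L,r]$ be $\true$ if and only if $G$ has a connected subgraph $H$ with $e \in E(H)$ that is $L$-edge-colorful and has exactly $r$ red edges (hence $|L|-r$ blue edges). The instance is a \yes-instance precisely when $D[e,[k],k/2]$ is $\true$ for some $e \in E(G)$, and a witnessing subgraph is recovered by a standard traceback through the table.

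The recurrence mirrors the one in the proof of Theorem~\ref{thm:bcs-fpt-rand}, with the color subset $L$ now playing the bookkeeping role that multilinearity played there. For the base case $L=\{\sigma(e)\}$, I set $D[e,\{\sigma(e)\},1]=\true$ if $e \in E_R$ and $D[e,\{\sigma(e)\},0]=\true$ if $e \in E_B$. For $|L|\ge 2$, $D[e,L,r]$ is $\true$ iff one of two situations occurs: (a) a \emph{pendant extension}, where some $e' \in N_G(e)$ satisfies $\sigma(e')\in L\setminus\{\sigma(e)\}$ and $D[e',L\setminus\{\sigma(e)\},r']$ is $\true$, where $r'=r-1$ if $e\in E_R$ and $r'=r$ if $e\in E_B$ (so $e$ is attached as a new pendant edge to a smaller colorful connected subgraph through the endpoint shared by $e$ and $e'$); or (b) a \emph{merge}, where some $e' \in N_G(e)$ together with a partition $L=L_1\uplus L_2$ with $\sigma(e')\in L_1$, $\sigma(e)\in L_2$, and an additive split $r_1+r_2=r$ satisfy that both $D[e',L_1,r_1]$ and $D[e,L_2,r_2]$ are $\true$ (so two colorful connected subgraphs with disjoint color sets are glued at the common endpoint of $e$ and $e'$). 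The disjointness $L_1 \cap L_2 = \emptyset$ guarantees the two pieces share no edge.

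Correctness follows by induction on $|L|$ in both directions, essentially lifting the two-case analysis of the proof of Theorem~\ref{thm:bcs-fpt-rand} from relaxed-subgraphs to genuine colorful subgraphs. The soundness direction checks that each branch assembles a connected $L$-colorful subgraph with the claimed red count; the key point is that because $L_1 \cap L_2 = \emptyset$ in the merge, the union is again $L$-colorful and in particular has $|L|$ distinct edges, so (in contrast to the relaxed-subgraph setting of Observation~\ref{obs:relaxed-structures}) balance of the bookkeeping immediately yields balance of an honest subgraph. The completeness direction carries the main work: given a target $[k]$-colorful balanced connected subgraph $H$ and a chosen anchor edge $e=\{u,v\}\in E(H)$, I distinguish whether $H-e$ is connected (feeding branch (a)) or splits into the two components containing $u$ and $v$ (feeding branch (b), with the color set and red count split according to the two sides). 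This is exactly Cases~1 and~2 of the proof of Theorem~\ref{thm:bcs-fpt-rand}, now read through the colorful lens.

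The main obstacle is precisely this completeness step: one must argue that every colorful connected subgraph decomposes along the recurrence, which hinges on the clean split of $H-e$ and on correctly matching the color subsets and red/blue counts of the two pieces. For the running time, the table has $\mathcal{O}^*(2^k)$ entries (indexed by $e$, $L$, and $r$, with $m=|E(G)|$ and $k+1$ contributing only polynomial factors). Evaluating the merge branch by iterating, for each $L$, over all subsets of $[k]$ as the candidate $L_1$ costs $2^k \cdot 2^k = 4^k$ up to polynomial factors, giving the claimed $\mathcal{O}^*(4^k)$ bound; a more careful enumeration of only the subsets $L_1 \subseteq L$ would even yield $\mathcal{O}^*(3^k)$, but $\mathcal{O}^*(4^k)$ suffices here and is the bound that combines with the derandomization step to give the final running time.
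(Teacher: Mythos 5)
Your proof is correct and follows essentially the same route as the paper's: a dynamic program indexed by an anchor edge, a color subset $L \subseteq [k]$, and a red-edge count, with a pendant-extension case and a merge case over partitions of $L$, and with completeness argued by deleting the anchor edge from a witness subgraph and analyzing the resulting one or two components. The only cosmetic differences are that you drop the redundant blue count (the paper carries both $r$ and $b$ with $|L|=r+b$) and your merge keeps the anchor edge inside one of the two glued pieces rather than treating it as the bridge between them; neither changes the argument nor the $\mathcal{O}^*(4^k)$ running-time analysis.
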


\begin{proof}
For $L\subseteq [k], e\in E(G)$ and $r,b\leq \frac{k}{2}$, define $\Lambda(e,L,r,b)$ to be 1 if there is a connected $L$-edge-colorful subgraph $H$ of $G$ containing $e$ such that $\vert E(H) \cap E_R \vert =r$, $\vert E(H) \cap E_B \vert =b$ and 0 otherwise. 
Clearly, $G$ has a $[k]$-edge-colorful balanced connected subgraph if and only if there is an edge $e\in E(G)$ such that $\Lambda(e,[k], \frac{k}{2}, \frac{k}{2})=1$. 
First observe that for $L \subseteq [k]$, $\vert L \vert=1$ ($L=\{i\}$), $\Lambda(e,i,1,0)=1$ if $e$ is in $E_R$ and 0 otherwise. 
Similarly, $\Lambda(e,i,0,1)=1$ if $e$ is in $E_B$ and 0 otherwise. 
Therefore, the entries $\Lambda(e,i,r,b)$ for $i\in [k],  e\in E(G),  r+b \leq 1$ can be filled in polynomial time. 
Also, $\Lambda(e,L,r,b)=0$ if $|L| \neq r+b$ or $r=0$ with $e \in E_R$ or $b=0$ with $e \in E_B$.

Next, consider $\Lambda(e, L, r, b)$ where $|L|>1$ (and $r+b >1$). We claim that $\Lambda(e, L, r, b)=1$ if and only if one of the following is true. Let $e=\{u,v\}$.
\begin{itemize}%[(1)]
    \item $\Lambda(e', L \backslash \sigma(\{u,v\}), r-1,b)=1$ where $e'\in N(e)$ and $e \in E_R$.
     \item $\Lambda(e', L \backslash \sigma(\{u,v\}), r,b-1)=1$ where $e'\in N(e)$ and $e \in E_B$.
    \item $\Lambda(e',L', r', b')=1$ and $\Lambda(e'',L\backslash (L' \cup \sigma(\{u,v\})), r-r'-1, b-b')=1$ where $e',e''\in N(e)$, $L'\subseteq L \backslash \sigma(\{u,v\})$, $r' \leq r-1$, $b'\leq b$ and $e \in E_R$. 
  \item $\Lambda(e',L', r', b')=1$ and $\Lambda(e'',L\backslash (L' \cup \sigma(\{u,v\})), r-r', b-b'-1)=1$ where $e',e''\in N(e)$, $L'\subseteq L \backslash \sigma(\{u,v\})$, $r' \leq r$, $b'\leq b-1$ and $e \in E_B$. 
\end{itemize}

Observe that if $H$ is a solution to one of the above, then adding $e$ to $H$ is required subgraph that makes $\Lambda(e, L, r, b)$ as 1.
Conversely, suppose $\Lambda(e, L, r, b)=1$. Then $G$ has a $L$-colorful connected subgraph $H$ such that $e\in E(H)$, $\vert E(H) \cap E_R \vert = r$ and $\vert E(H) \cap E_B \vert = b$. 
Without loss of generality, let $e=\{u,v\} \in E_R$. 
Consider the graph $H'$ obtained from $H$ by deleting $e$. 
Then, $H'$ has at most two components.
If $H'$ is connected, then either $\Lambda(e', L \backslash \sigma(\{u,v\}), r-1,b)=1$ or $\Lambda(e', L \backslash \sigma(\{u,v\}), r,b-1)=1$ for some edge $e' \in E(H') \cap N(e)$. 
Otherwise, $H$ has two components and let $H_u$ be the one containing $u$ and $H_v$ be the one containing $v$. 
If $H_u$ has no edge, then $\Lambda(e', L \backslash \sigma(e), r-1,b)=1$ for some $e'\in N(e) \cap E(H_v)$.
Similarly, if $H_v$ has no edge, then $\Lambda(e', L \backslash \sigma(e), r-1,b)=1$ for some $e'\in N(e) \cap E(H_u)$.
Otherwise, let $L'$ denote the set of colors $\sigma(e)$ of edges $e\in H_{u}$. Then $L \backslash (L' \cup \sigma(\{u,v\}))$ is the set of colors of edges in $H_{v}$. 
Also $\Lambda(e',L',r',b')=1$ where $r' = \vert E_R \cap E(H_{u})\vert$ and $b' = \vert E_B \cap E(H_{u}) \vert$ and $\Lambda(e'', L \backslash (L' \cup \sigma(e)), r-r'-1, b-b')=1$. 

The time taken to compute an entry $\Lambda(e,L,r,b)$ is $m k^2 2^{k}$ as we need to go over choices $e',r',b'$ and $L' \subseteq L$ where $\vert L \vert \leq k$, $\vert E(G) \vert =m$ and $r',b' \leq \frac{k}{2}$. 
As the number of entries $\Lambda(e,L,r,b)$ is at most $m k^2 2^{k}$, the  running time of the algorithm is $\mathcal{O}^*(4^{k})$. 
\end{proof}

Standard derandomization techniques using \textit{perfect hash families} \cite{CyganFKLMPPS15,AlonYZ95,NaorSS95} leads to the following result.

\begin{restatable}{mytheorem}{bcscolor}
\label{thm:bcs-fpt}
\textsc{Exact Edge Balanced Connected Subgraph} can be solved in $\mathcal{O}^*((4e)^{k})$ time.
\end{restatable}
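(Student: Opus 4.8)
The plan is to combine the color-coding dynamic program of Lemma~\ref{lem:colorful-bcs} with a standard derandomization via perfect hash families. The algorithm of Lemma~\ref{lem:colorful-bcs} assumes we are handed a coloring $\sigma : E(G) \to [k]$ and, in $\mathcal{O}^*(4^k)$ time, decides whether $G$ contains a $[k]$-edge-colorful balanced connected subgraph, i.e.\ a balanced connected subgraph on $k$ edges that happens to receive $k$ distinct colors under $\sigma$. The issue is that an arbitrary fixed $\sigma$ need not color the $k$ edges of a genuine solution with distinct colors. The resolution is to try a family of colorings rich enough that, for \emph{every} $k$-element subset of edges, at least one coloring in the family is injective on that subset.

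First I would invoke the notion of an $(m,k)$-perfect hash family: a family $\mathcal{F}$ of functions $\sigma : E(G) \to [k]$ (where $m = |E(G)|$) such that for every subset $S \subseteq E(G)$ with $|S| = k$, there is some $\sigma \in \mathcal{F}$ that is injective on $S$. By the construction of Naor, Schulman and Srinivasan \cite{NaorSS95,CyganFKLMPPS15}, such a family of size $e^{k} k^{\mathcal{O}(\log k)} \log m$ can be computed deterministically in time linear in its size times a polynomial in $m$. Here the key parameter is that the number of colorings is $e^{k}$ up to polynomial and quasi-polynomial-in-$k$ factors, so iterating over $\mathcal{F}$ contributes a factor of roughly $e^{k}$ to the running time.

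The second step is to run the dynamic program of Lemma~\ref{lem:colorful-bcs} once for each coloring $\sigma \in \mathcal{F}$, reporting \yes\ iff some run finds a $[k]$-edge-colorful balanced connected subgraph. For correctness: if $G$ has a balanced connected subgraph $H$ on exactly $k$ edges, let $S = E(H)$; by the defining property of $\mathcal{F}$ there is a $\sigma \in \mathcal{F}$ injective on $S$, and under this $\sigma$ the subgraph $H$ is $[k]$-edge-colorful (it uses $k$ distinct colors, which must then be all of $[k]$), so Lemma~\ref{lem:colorful-bcs} detects it. Conversely, any $[k]$-edge-colorful balanced connected subgraph found for some $\sigma$ has exactly $k$ edges and is balanced and connected, hence is a genuine solution. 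Thus the combined procedure is correct.

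For the running time, each invocation of Lemma~\ref{lem:colorful-bcs} costs $\mathcal{O}^*(4^{k})$, and we perform $|\mathcal{F}| = \mathcal{O}^*(e^{k})$ invocations, giving a total of $\mathcal{O}^*((4e)^{k})$, which is exactly the claimed bound (the $k^{\mathcal{O}(\log k)}$ and $\log m$ factors from $|\mathcal{F}|$ are absorbed into the $\mathcal{O}^*$ notation). I do not anticipate a genuine obstacle here; the only point needing mild care is confirming that the product of the two exponential factors $4^k$ and $e^k$ combines to $(4e)^k$ and that all sub-exponential overhead is swallowed by $\mathcal{O}^*$. The main conceptual content lies in Lemma~\ref{lem:colorful-bcs}, which is already established; this theorem is the routine derandomization wrapper around it.
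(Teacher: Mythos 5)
Your proposal is correct and follows essentially the same route as the paper's own proof: both construct an $(m,k)$-perfect hash family of edge-colorings of size $e^{k} k^{\mathcal{O}(\log k)} \log m$ via \cite{NaorSS95,CyganFKLMPPS15} and run the dynamic program of Lemma~\ref{lem:colorful-bcs} for each coloring, yielding the $\mathcal{O}^*((4e)^{k})$ bound. Your write-up is if anything slightly more explicit about the converse direction (that any colorful solution found is a genuine solution), but the argument is the same.
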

\begin{proof}
First, we use the results of \cite{CyganFKLMPPS15,AlonYZ95,NaorSS95} to construct a family $\mathcal{F}_{m,k}$ of coloring functions $\sigma:E(G) \rightarrow [k]$ of size $e^{k} {k}^{\mathcal{O}(\log k)} \log m$ in $e^{k} {k}^{\mathcal{O}(\log k)} m \log m$ time satisfying the following property: for every set $E' \subseteq E(G)$ of size $k$, there is a function $\sigma \in \mathcal{F}_{m,k}$ such that $\sigma(e) \neq \sigma(e')$ for any two distinct edges $e,e' \in E'$. Now, for each function in $\mathcal{F}_{m,k}$, we use the dynamic programming algorithm given in Lemma \ref{lem:colorful-bcs} to find a $[k]$-edge-colorful balanced connected subgraph, if one exists.
The properties of $\mathcal{F}_{m,k}$ ensure that, if there exists a balanced connected subgraph $H\subseteq G$ on $k$ edges, then there exists $f\in \mathcal{F}$ that is injective on $E(H)$ and, consequently, the
algorithm finds a $[k]$-edge-colorful balanced connected subgraph. Hence, we obtain a deterministic algorithm which can solve \textsc{Exact Edge Balanced Connected Subgraph} in $\mathcal{O}^*((4e)^{k})$ time.  
\end{proof}

Next, we show that \textsc{Exact Edge Balanced Tree} can be solved in $\mathcal{O}^*((4e)^{k})$ time.

\begin{mylemma}
\label{lem:colorful-bt}
Given a red-blue graph $G$ associated with coloring $\tau: V(G) \rightarrow [k+1]$, one can determine if $G$ has a $[k+1]$-vertex-colorful balanced tree in $\mathcal{O}^*(4^k)$ time.
\end{mylemma}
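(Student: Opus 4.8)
The plan is to follow the same color-coding dynamic programming template used in Lemma~\ref{lem:colorful-bcs}, but indexed by a representative vertex (a root) rather than a representative edge, since now we are building a tree and want to track vertex colors under $\tau$ instead of edge colors. Specifically, for a vertex $v \in V(G)$, a subset $L \subseteq [k+1]$, and counts $r, b \leq \frac{k}{2}$, I would define $\Lambda(v, L, r, b)$ to be $1$ if there is an $L$-vertex-colorful tree $T \subseteq G$ rooted at $v$ (so $v \in V(T)$ and $\tau(V(T)) = L$ with all vertex colors distinct) such that $\vert E(T) \cap E_R\vert = r$ and $\vert E(T) \cap E_B\vert = b$, and $0$ otherwise. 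Since a tree on $r+b$ edges has exactly $r+b+1$ vertices, a valid entry requires $\vert L\vert = r+b+1$; entries violating this are set to $0$. The graph $G$ then has a $[k+1]$-vertex-colorful balanced tree if and only if $\Lambda(v, [k+1], \frac{k}{2}, \frac{k}{2}) = 1$ for some $v \in V(G)$.

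Next I would give the recurrence. The base case is $\Lambda(v, \{\tau(v)\}, 0, 0) = 1$ for every $v$, corresponding to the single-vertex tree. For the inductive step, a rooted tree at $v$ with at least one edge decomposes at the root: pick a child $u$ of $v$ (so $\{u,v\} \in E(G)$), split off the subtree hanging below $u$ (a tree rooted at $u$), and combine it with the rest of the tree still rooted at $v$. Because $\tau$ is used for vertex-colorfulness, the color classes on the two parts must be disjoint, so I would sum over subsets $L' \subseteq L$ with $\tau(v) \notin L'$ assigned to the $u$-subtree, and match up the red/blue edge counts with the extra edge $\{u,v\}$ accounting for one unit of red or blue. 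Concretely,
\[
\Lambda(v, L, r, b) = \bigvee_{\substack{u \in N(v),\, L' \subseteq L \setminus \{\tau(v)\} \\ \tau(u) \in L',\, r'+r''=r,\, b'+b''=b}} \big[\,\Lambda(u, L', r', b') \wedge \Lambda(v, (L \setminus L') \cup \{\tau(v)\}, r'', b'')\,\big],
\]
where the edge $\{u,v\}$ contributes to whichever of $r', b'$ is appropriate (adjusted so the root-side count $r''$ or $b''$ is decremented to leave room for it). I would verify both directions of the recurrence exactly as in Lemma~\ref{lem:colorful-bcs}: forward, gluing two valid colorful trees at $v$ along the edge $\{u,v\}$ yields a valid colorful tree (disjointness of color classes guarantees the union is colorful and acyclic); backward, given a colorful tree rooted at $v$ with an edge, deleting a pendant subtree at the root exhibits the two sub-instances.

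For the running time, the number of table entries is at most $n \cdot 2^{k+1} \cdot k^2$, and computing each entry requires iterating over neighbors $u$, subsets $L' \subseteq L$, and the splits of $r, b$, which is $n \cdot 2^{k+1} \cdot k^2$ work; this gives the claimed $\mathcal{O}^*(4^k)$ bound, matching Lemma~\ref{lem:colorful-bcs}. The main obstacle I anticipate is getting the decomposition bookkeeping exactly right, since a tree has no canonical "last edge" the way a connected subgraph recurrence in the edge-indexed formulation did; the root-based split must correctly avoid double-counting and must ensure that the extra edge $\{u,v\}$ is charged to precisely one color and one side of the split. I would also note that unlike the edge-based version, vertex-colorfulness under $\tau$ automatically forces the tree structure to have distinct vertices, so acyclicity and the injectivity needed for Observation~\ref{obs:relaxed-structures} come for free from the $\vert V(T)\vert = \vert L\vert$ condition. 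Finally, as in Theorem~\ref{thm:bcs-fpt}, derandomization via perfect hash families over $V(G)$ would convert this into the deterministic $\mathcal{O}^*((4e)^k)$ algorithm for \textsc{Exact Edge Balanced Tree}.
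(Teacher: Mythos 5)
Your proposal is correct, but it is organized differently from the paper's proof. The paper indexes its DP table by \emph{edges}: $\Lambda(e,L,r,b)=1$ iff there is an $L$-vertex-colorful tree containing the edge $e$ with the prescribed red/blue edge counts, with base cases at single edges ($|L|=2$) and a recurrence split into six cases (pendant-vertex removal at either endpoint of $e$, for each edge color, plus two edge-split cases), deliberately mirroring the edge-indexed template of Lemma~\ref{lem:colorful-bcs}. You instead root the DP at \emph{vertices}, with the single-vertex base case $\Lambda(v,\{\tau(v)\},0,0)=1$ and one unified recurrence that peels off the subtree hanging below one child $u$ of the root; this is well-founded because both color sets in the split are proper subsets of $L$ (the child side omits $\tau(v)$, the root side omits $\tau(u)$), and the gluing direction is sound precisely for the reason you note: disjointness of the color classes forces vertex-disjointness, so two trees joined by $\{u,v\}$ stay acyclic. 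Your formulation is the more standard rooted-tree color-coding DP and is arguably cleaner (one case instead of six, and since the table is boolean your worry about double-counting is moot); the paper's formulation buys uniformity of presentation with its connected-subgraph lemma. Two small bookkeeping points in your write-up: the constraint $r'+r''=r,\ b'+b''=b$ in your displayed recurrence should literally read $r'+r''=r-1$ when $\{u,v\}\in E_R$ and $b'+b''=b-1$ when $\{u,v\}\in E_B$ (you describe this adjustment in words, so this is cosmetic), and the term $(L\setminus L')\cup\{\tau(v)\}$ is just $L\setminus L'$ since $\tau(v)\notin L'$. The running-time analysis ($n\cdot 2^{k+1}\cdot k^{\mathcal{O}(1)}$ entries, comparable work per entry, hence $\mathcal{O}^*(4^k)$) matches the paper's bound.
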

\begin{proof}
For $L\subseteq [k+1], e\in E(G)$ and $r,b\leq \frac{k}{2}$, define $\Lambda(e,L,r,b)$ to be 1 if there is an $L$-vertex-colorful tree $T$ of $G$ containing $e$ such that $\vert E(T) \cap E_R \vert =r$, $\vert E(T) \cap E_B \vert =b$ and 0 otherwise. 
Clearly, $G$ has a $[k+1]$-colorful balanced tree if and only if $\exists e\in E(G)$ such that $\Lambda(e,[k+1], \frac{k}{2}, \frac{k}{2})=1$.
Observe that $\Lambda(e,L,r,b)=0$ if $|L| \neq r+b+1$ and $\Lambda(e,L,1,0)$ is 1 if and only if $L=\{i,j\}$ where $e=\{u,v\} \in E_R$ with $\tau(u)=i$ and $\tau(v)=j$. Similarly, $\Lambda(e,L,0,1)$ is 1 if and only if $L=\{i,j\}$ where $e=\{u,v\} \in E_B$ with $\tau(u)=i$ and $\tau(v)=j$.

Consider $\Lambda(e, L, r, b)$ where $r+b \geq 1$. Then, $\Lambda(e, L, r, b)=1$ if and only if one of the following holds. Let $e=\{u,v\}$.
\begin{itemize}
\item $\Lambda(e', L \backslash \tau(u), r-1,b)=1$ where $e'\in N(e)$ and $e \in E_R$.
\item $\Lambda(e', L \backslash \tau(v), r-1,b)=1$ where $e'\in N(e)$ and $e \in E_R$.
\item $\Lambda(e', L \backslash \tau(u), r,b-1)=1$ where $e'\in N(e)$ and $e \in E_B$.
\item $\Lambda(e', L \backslash \tau(v), r,b-1)=1$ where $e'\in N(e)$ and $e \in E_B$.
    \item $\Lambda(e',L', r', b')=1$ and $\Lambda(e'',L'', r-r'-1, b-b')=1$ where $e',e''\in N(e)$, $L'\subseteq L \backslash \tau(u)$, $L''=L \setminus L'$, $r' \leq r-1$, $b'\leq b$ and $e \in E_R$. 
   \item $\Lambda(e',L', r', b')=1$ and $\Lambda(e'',L'', r-r', b-b'-1)=1$ where $e',e''\in N(e)$, $L'\subseteq L \backslash \tau(u)$, $L''=L \setminus L'$, $r' \leq r$, $b'\leq b-1$ and $e \in E_B$. 
\end{itemize}
The time taken to compute an entry $\Lambda(e,L,r,b)$ is $m k^2 2^{k+1}$ as we need to go over choices $e',r',b'$ and $L' \subseteq L$ where $\vert L \vert \leq k+1$, $\vert E(G) \vert =m$ and $r',b' \leq \frac{k}{2}$. 
As the number of entries $\Lambda(e,L,r,b)$ is at most $m k^2 2^{k+1}$, the  running time of the algorithm for a given $T$ is $\mathcal{O}^*(4^{k})$. 
\end{proof}

Similar to Theorem \ref{thm:bcs-fpt}, Lemma \ref{lem:colorful-bt} along with derandomization using perfect hash families lead to the following result. 

\begin{restatable}{mytheorem}{btcolor}
\label{thm:bt-fpt}
\textsc{Exact Edge Balanced Tree} can be solved in $\mathcal{O}^*((4e)^{k})$ time.
\end{restatable}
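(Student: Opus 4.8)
The plan is to mirror the structure of Theorem~\ref{thm:bcs-fpt} exactly, substituting the dynamic programming routine of Lemma~\ref{lem:colorful-bt} for that of Lemma~\ref{lem:colorful-bcs}. The key difference is that here we color \emph{vertices} rather than edges: a balanced tree of size $k$ has exactly $k+1$ vertices, so we must produce a family of colorings that is injective on any vertex set of size $k+1$, and then run the vertex-colorful DP over each coloring in the family.

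\begin{proof}
We use the results of \cite{CyganFKLMPPS15,AlonYZ95,NaorSS95} to construct a family $\mathcal{F}_{n,k+1}$ of coloring functions $\tau:V(G) \rightarrow [k+1]$ of size $e^{k+1} (k+1)^{\mathcal{O}(\log (k+1))} \log n$ in $e^{k+1} (k+1)^{\mathcal{O}(\log (k+1))} n \log n$ time satisfying the following property: for every set $V' \subseteq V(G)$ of size $k+1$, there is a function $\tau \in \mathcal{F}_{n,k+1}$ such that $\tau(u) \neq \tau(v)$ for any two distinct vertices $u,v \in V'$. For each function in $\mathcal{F}_{n,k+1}$, we run the dynamic programming algorithm of Lemma~\ref{lem:colorful-bt} to determine whether $G$ has a $[k+1]$-vertex-colorful balanced tree. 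If any of these runs returns \yes, we output \yes; otherwise, we output \no.

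The correctness follows from the defining property of $\mathcal{F}_{n,k+1}$. If $G$ has a balanced tree $T$ on $k$ edges, then $T$ has exactly $k+1$ vertices, so there is a function $\tau \in \mathcal{F}_{n,k+1}$ that is injective on $V(T)$. Under such a $\tau$, the tree $T$ is $[k+1]$-vertex-colorful, and hence the algorithm of Lemma~\ref{lem:colorful-bt} detects it. Conversely, any $[k+1]$-vertex-colorful balanced tree found by the algorithm is itself a balanced tree on $k$ edges in $G$. Since each run of Lemma~\ref{lem:colorful-bt} takes $\mathcal{O}^*(4^k)$ time and $|\mathcal{F}_{n,k+1}| = e^{k+1} (k+1)^{\mathcal{O}(\log (k+1))} \log n = \mathcal{O}^*(e^k)$, the total running time is $\mathcal{O}^*((4e)^{k})$.
\end{proof}

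The main obstacle, which is really only a bookkeeping point rather than a genuine difficulty, is keeping track of the shift from $k$ to $k+1$: a tree of size $k$ has $k+1$ vertices, so the perfect hash family must separate sets of size $k+1$ and use $k+1$ colors, which is why Lemma~\ref{lem:colorful-bt} is stated for $[k+1]$-vertex-colorful trees. Since $e^{k+1}=e\cdot e^k$ is still $\mathcal{O}^*(e^k)$, this constant-factor shift is absorbed into the $\mathcal{O}^*$ notation and does not affect the final bound. Everything else is a verbatim adaptation of the color-coding derandomization argument already carried out in the proof of Theorem~\ref{thm:bcs-fpt}.
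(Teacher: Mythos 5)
Your proposal is correct and follows exactly the route the paper intends: the paper's own ``proof'' of Theorem~\ref{thm:bt-fpt} is a one-line remark that it follows from Lemma~\ref{lem:colorful-bt} together with derandomization via perfect hash families, just as Theorem~\ref{thm:bcs-fpt} follows from Lemma~\ref{lem:colorful-bcs}. Your write-up simply makes explicit the bookkeeping (hashing vertex sets of size $k+1$ into $k+1$ colors, absorbing the factor $e^{k+1}=e\cdot e^{k}$ into the $\mathcal{O}^*$ notation) that the paper leaves implicit.
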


As the reader would have already observed, a simpler dynamic programming algorithm along with derandomization using perfect hash families results in an algorithm for \textsc{Exact Edge Balanced Path} that runs in $\mathcal{O}^*((2e)^{k})$ time.
Subsequently, we describe a faster algorithm using representative sets.  
We begin with some definitions and results related to representative sets.
For a finite set $U$, let ${U \choose p}$ denote the set of all subsets of size $p$ of $U$. 
Given two families $\mathcal{S}_1, \mathcal{S}_2 \subseteq 2^{U}$, the {\em convolution} of $\mathcal{S}_1$ and $\mathcal{S}_2$ is the new family defined as $\mathcal{S}_1 * \mathcal{S}_2= \{X \cup Y \mid X \in \mathcal{S}_1, Y \in \mathcal{S}_2, X \cap Y=\emptyset \}$.

\begin{restatable}{mydefinition}{repset}
Let $U$ be a set and $\mathcal{S} \subseteq {U \choose p}$.
A subfamily $\widehat{\mathcal{S}} \subseteq \mathcal{S}$ is said to $q$-represent $\mathcal{S}$ (denoted as $\widehat{\mathcal{S}} \subseteq^q_{rep} \mathcal{S}$) if for every set $Y \subseteq U$ of size at most $q$ such that there is a set $X \in \mathcal{S}$ with $X \cap Y=\emptyset$, there is a set $\widehat{X} \in \widehat{\mathcal{S}}$ with $\widehat{X} \cap Y=\emptyset$.
If $\widehat{\mathcal{S}} \subseteq^q_{rep} \mathcal{S}$, then $\widehat{\mathcal{S}}$ is called a {\em $q$-representative family} for $\mathcal{S}$.
\end{restatable}

Representative families (also called representative sets) are transitive and have nice union and convolution properties (Proposition \ref{prop:rep-sets-prop}). 

\begin{restatable}{myproposition}{repsetprop}
\label{prop:rep-sets-prop}
{\em \cite[Lemmas 12.26, 12.27 and 12.28]{CyganFKLMPPS15}}
Let $U$ be a fnite set.
\begin{enumerate}
\item Let $\mathcal{S}_1,  \mathcal{S}_2 \subseteq {U \choose p}$. If $\widehat{\mathcal{S}}_1 \subseteq^q_{rep} \mathcal{S}_1$ and $\widehat{\mathcal{S}}_2 \subseteq^q_{rep} \mathcal{S}_2$, then $\widehat{\mathcal{S}}_1 \cup \widehat{\mathcal{S}}_2 \subseteq^q_{rep} \mathcal{S}_1 \cup \mathcal{S}_2$.
\item Let $\mathcal{S} \subseteq {U \choose p}$. If $\widehat{\mathcal{S}} \subseteq^q_{rep} \mathcal{S}'$ and $\mathcal{S}' \subseteq^q_{rep} \mathcal{S}$, then $\widehat{\mathcal{S}} \subseteq^q_{rep} \mathcal{S}$.
\item Let $\mathcal{S}_1 \subseteq {U \choose p_1}$, $\mathcal{S}_2 \subseteq {U \choose p_2}$. If $\widehat{\mathcal{S}}_1 \subseteq^{k-p_1}_{rep} \mathcal{S}_1$ and $\widehat{\mathcal{S}}_2 \subseteq^{k-p_2}_{rep} \mathcal{S}_2$, then $\widehat{\mathcal{S}}_1 * \widehat{\mathcal{S}}_2 \subseteq^{k-p_1-p_2}_{rep} \mathcal{S}_1 * \mathcal{S}_2$.
\end{enumerate}
\end{restatable}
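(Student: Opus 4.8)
The plan is to derive all three parts directly from the definition of $q$-representation, since none of them requires a combinatorial construction: for each ``test set'' $Y$ we only need to exhibit a witness in the smaller family. The first two parts are immediate, and the convolution property is the one carrying the real content.

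For the union property, I would fix a set $Y \subseteq U$ with $|Y| \leq q$ for which some $X \in \mathcal{S}_1 \cup \mathcal{S}_2$ satisfies $X \cap Y = \emptyset$. Without loss of generality $X \in \mathcal{S}_1$, so applying $\widehat{\mathcal{S}}_1 \subseteq^q_{rep} \mathcal{S}_1$ to $Y$ yields some $\widehat{X} \in \widehat{\mathcal{S}}_1 \subseteq \widehat{\mathcal{S}}_1 \cup \widehat{\mathcal{S}}_2$ disjoint from $Y$, as required. Transitivity is a two-step chase: given $Y$ of size at most $q$ with some $X \in \mathcal{S}$ disjoint from it, first use $\mathcal{S}' \subseteq^q_{rep} \mathcal{S}$ to obtain $X' \in \mathcal{S}'$ disjoint from $Y$, then use $\widehat{\mathcal{S}} \subseteq^q_{rep} \mathcal{S}'$ to obtain $\widehat{X} \in \widehat{\mathcal{S}}$ disjoint from $Y$.

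The convolution property is where I expect the main work, and the key idea is a \emph{sliding} argument that replaces the two halves one at a time, treating the other half as part of the forbidden set. Fix $Y \subseteq U$ with $|Y| \leq k - p_1 - p_2$ and suppose some $Z = X_1 \cup X_2 \in \mathcal{S}_1 * \mathcal{S}_2$, with $X_1 \in \mathcal{S}_1$, $X_2 \in \mathcal{S}_2$ and $X_1 \cap X_2 = \emptyset$, satisfies $Z \cap Y = \emptyset$. The crucial bookkeeping is on the budget: the set $Y \cup X_2$ has size at most $(k - p_1 - p_2) + p_2 = k - p_1$ and is disjoint from $X_1$, so $\widehat{\mathcal{S}}_1 \subseteq^{k-p_1}_{rep} \mathcal{S}_1$ produces $\widehat{X}_1 \in \widehat{\mathcal{S}}_1$ disjoint from $Y \cup X_2$. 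Now $Y \cup \widehat{X}_1$ has size at most $(k - p_1 - p_2) + p_1 = k - p_2$ and is disjoint from $X_2$, so $\widehat{\mathcal{S}}_2 \subseteq^{k-p_2}_{rep} \mathcal{S}_2$ produces $\widehat{X}_2 \in \widehat{\mathcal{S}}_2$ disjoint from $Y \cup \widehat{X}_1$. Since $\widehat{X}_1 \cap \widehat{X}_2 = \emptyset$, the union $\widehat{X}_1 \cup \widehat{X}_2$ lies in $\widehat{\mathcal{S}}_1 * \widehat{\mathcal{S}}_2$ and is disjoint from $Y$, which is exactly what $(k - p_1 - p_2)$-representation demands.

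The only subtle point, and the reason the representation parameters are taken as $k - p_i$ rather than a single fixed $q$, is precisely this budget accounting: each replacement temporarily enlarges the forbidden set by the size of the other half, and one must check that the enlarged set still fits within the representation guarantee of the family being queried. Once the two inequalities $|Y \cup X_2| \leq k - p_1$ and $|Y \cup \widehat{X}_1| \leq k - p_2$ are verified the argument closes; this is standard, and the statement is quoted from \cite[Lemmas 12.26--12.28]{CyganFKLMPPS15}.
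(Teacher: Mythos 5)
Your proposal is correct. The paper gives no proof of its own for this proposition---it is imported verbatim from Cygan et al.\ (Lemmas 12.26--12.28)---and your argument (direct witness-chasing for union and transitivity, plus the budget-accounting replacement $|Y \cup X_2| \leq k-p_1$, $|Y \cup \widehat{X}_1| \leq k-p_2$ for convolution) is exactly the standard proof from that reference, with all the size bounds and disjointness checks carried out correctly.
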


A classical result due to Bollob\'{a}s states that small representative families exist \cite{Bollobas65} and Proposition \ref{prop:rep-sets-comp} \cite{CyganFKLMPPS15,FominLPS16,ShachnaiZ16} shows that such families can be efficiently computed. 

\begin{restatable}{myproposition}{repsetcomp}
{\em \cite{CyganFKLMPPS15,FominLPS16,ShachnaiZ16}}
\label{prop:rep-sets-comp}
There is an algorithm that given a family $\mathcal{S} \subseteq {U \choose p}$, a rational $0 < x < 1$ and integers $p$, $k \geq p$, computes  $\widehat{\mathcal{S}} \subseteq^{k-p}_{rep} \mathcal{S}$ of size at most
$x^{-p} (1-x)^{-(k-p)}2^{o(k)}$ in $|\mathcal{S}| (1-x)^{-(k-p)} 2^{o(k)}$ time.
Further, if $|\mathcal{S}|=\mathcal{O}^*(x^{-p} (1-x)^{-(k-p)}2^{o(k)})$ and  $x=\frac{p}{2k-p}$, then the construction of $\widehat{\mathcal{S}}$ takes 
$\mathcal{O}^*(\frac{(2k-p)^{2k-p}}{p^p (2k-2p)^{2k-2p}}2^{o(k)})$ time, 
moreover, this running time is maximized when $p=(1-\frac{1}{\sqrt{5}})k$ and is $\mathcal{O}^*(\phi^{2k + o(k)})$ (which is $\mathcal{O}^*(2.619^k)$) where $\phi$ is the golden ratio $\frac{1+\sqrt{5}}{2}$.
\end{restatable}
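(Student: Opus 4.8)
The plan is to separate the statement into its \emph{existence} part and its \emph{optimization} part. The first sentence --- that a family $\widehat{\mathcal{S}} \subseteq^{k-p}_{rep} \mathcal{S}$ of size at most $x^{-p}(1-x)^{-(k-p)}2^{o(k)}$ can be produced in $|\mathcal{S}|(1-x)^{-(k-p)}2^{o(k)}$ time --- is exactly the efficient construction of representative families established in \cite{CyganFKLMPPS15,FominLPS16,ShachnaiZ16}, so I would invoke it as a black box; the remaining work is purely to optimize the free parameters $x$ and $p$ in these bounds. Throughout, the $2^{o(k)}$ factors are subexponential and hence do not influence which parameter values maximize the exponential base, so I would suppress them during the optimization and reinstate them at the end.

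For the choice of $x$, I would note that the total running time is the product of $|\mathcal{S}| = \mathcal{O}^*(x^{-p}(1-x)^{-(k-p)})$ with the factor $(1-x)^{-(k-p)}$, namely $\mathcal{O}^*(x^{-p}(1-x)^{-2(k-p)})$. Minimizing the exponent $-p\ln x - 2(k-p)\ln(1-x)$ over $x \in (0,1)$ by setting $-p/x + 2(k-p)/(1-x)=0$ gives $x = \frac{p}{2k-p}$, and hence $1-x = \frac{2(k-p)}{2k-p}$. Substituting these into $x^{-p}(1-x)^{-2(k-p)}$ and collecting the powers of $(2k-p)$ yields the closed form
\[
\frac{(2k-p)^{2k-p}}{p^{p}(2k-2p)^{2k-2p}},
\]
which is the running time claimed in the statement; this is a direct algebraic simplification.

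For the choice of $p$, I would set $p=\alpha k$ with $\alpha\in[0,1]$ and take logarithms. Expanding, the coefficient of $\ln k$ is $(2-\alpha)-\alpha-2(1-\alpha)=0$, confirming the bound has the form $c(\alpha)^{k}2^{o(k)}$ with
\[
\ln c(\alpha)=(2-\alpha)\ln(2-\alpha)-\alpha\ln\alpha-2(1-\alpha)\ln 2-2(1-\alpha)\ln(1-\alpha).
\]
Differentiating in $\alpha$ and equating to zero, the constant terms cancel and the stationarity condition reduces to $4(1-\alpha)^{2}=\alpha(2-\alpha)$, equivalently the quadratic $5\alpha^{2}-10\alpha+4=0$; its unique root in $[0,1]$ is $\alpha = 1-\tfrac{1}{\sqrt 5}$, giving the stated maximizer $p=(1-\tfrac{1}{\sqrt 5})k$.

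The final step, which I expect to be the main obstacle if attacked naively, is evaluating $c(\alpha)$ at this $\alpha$ and recognizing it as $\phi^{2}$. Rather than substituting numerically, I would use the stationarity relation $\ln(2-\alpha)+\ln\alpha = 2\ln 2 + 2\ln(1-\alpha)$ to eliminate $\ln(2-\alpha)$ from the formula for $\ln c(\alpha)$; after collecting terms this collapses to the compact expression $\ln c(\alpha) = 2\ln\frac{2(1-\alpha)}{\alpha}$. At $\alpha = 1-\tfrac{1}{\sqrt 5}$ one has $1-\alpha = \tfrac{1}{\sqrt 5}$, and a short rationalization gives $\frac{2(1-\alpha)}{\alpha} = \frac{2}{\sqrt 5 - 1} = \frac{1+\sqrt 5}{2} = \phi$, whence $c(\alpha)=\phi^{2}$. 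Reinstating the $2^{o(k)}$ factor yields the running time $\mathcal{O}^*(\phi^{2k+o(k)}) = \mathcal{O}^*(2.619^{k})$, completing the argument.
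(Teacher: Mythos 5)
Your proposal is correct, but note that the paper itself offers no proof of this proposition: it is imported wholesale from \cite{CyganFKLMPPS15,FominLPS16,ShachnaiZ16}, where both the efficient construction of representative families and the parameter optimization (this is precisely the analysis behind the $\mathcal{O}^*(2.619^k)$ algorithm for \textsc{Long Directed Cycle} in the book of Cygan et al.) are carried out. What you have done is black-box only the first sentence and re-derive the ``Further'' part by hand, and your calculus checks out: the total time $x^{-p}(1-x)^{-2(k-p)}$ is minimized at $x=\frac{p}{2k-p}$, which collapses algebraically to $\frac{(2k-p)^{2k-p}}{p^p(2k-2p)^{2k-2p}}$; writing $p=\alpha k$, the $k\ln k$ terms indeed cancel, the stationarity condition $4(1-\alpha)^2=\alpha(2-\alpha)$ is the quadratic $5\alpha^2-10\alpha+4=0$ with unique root $\alpha=1-\frac{1}{\sqrt{5}}$ in $[0,1]$, and your use of the stationarity identity to collapse $\ln c(\alpha)$ to $2\ln\frac{2(1-\alpha)}{\alpha}=2\ln\phi$ is a clean way to avoid numerical substitution. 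Two minor points: (i) to conclude that this critical point is a \emph{maximizer} of the running-time bound, as the statement asserts, you should also observe that $c(0)=c(1)=1<\phi^2$, so the unique interior stationary point beats both endpoints; (ii) the choice $x=\frac{p}{2k-p}$ is hypothesized in the statement, so your minimization over $x$ is motivation rather than a required step --- only the substitution needs verifying, which you do. Neither point affects correctness.
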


Now, we are ready to describe an algorithm for \textsc{Exact Edge Balanced Path} using representative sets. 

\begin{restatable}{mytheorem}{bprepset}
\label{thm:bp-fpt-rep}
\textsc{Exact Edge Balanced Path} can be solved in $\mathcal{O}^*(2.619^{k})$ time.
\end{restatable}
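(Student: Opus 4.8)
The plan is to run a dynamic programming over partial paths and to use representative families to bound the number of partial solutions kept at each step, exactly in the spirit of the classical representative-sets algorithm for \textsc{Longest Path}; the only new ingredient is that, alongside the length of a partial path, I would also track its red/blue edge counts so that balance can be tested at the end. I would take the universe to be $U = V(G)$ and, for every $v \in V(G)$, every $j \in \{0,1,\ldots,k\}$ and every pair $(r,b)$ with $r+b=j$ and $r,b \le \frac{k}{2}$, define $\mathcal{S}_j(v,r,b) \subseteq {U \choose j+1}$ to be the family of vertex sets of simple paths on $j$ edges that end at $v$ and use exactly $r$ red and $b$ blue edges. Restricting to $r,b \le \frac{k}{2}$ loses nothing, since every prefix of a balanced path of size $k$ has at most $\frac{k}{2}$ edges of each color. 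Since a full solution is a simple path on $k$ edges (hence $k+1$ vertices), $G$ has a balanced path of size $k$ if and only if $\mathcal{S}_k(v,\frac{k}{2},\frac{k}{2}) \ne \emptyset$ for some $v \in V(G)$.

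First I would set up the recurrence. A path counted in $\mathcal{S}_j(v,r,b)$ arises from a shorter path ending at a neighbour $u$ of $v$ by appending the edge $\{u,v\}$, whose color decrements either $r$ or $b$, and the new endpoint $v$ must be fresh. Writing this with the convolution operation, with base case $\mathcal{S}_0(v,0,0)=\{\{v\}\}$,
\[
\mathcal{S}_j(v,r,b) = \bigcup_{\substack{u \in N(v)\\ \{u,v\} \in E_R}} \mathcal{S}_{j-1}(u,r-1,b) * \{\{v\}\} \;\cup\; \bigcup_{\substack{u \in N(v)\\ \{u,v\} \in E_B}} \mathcal{S}_{j-1}(u,r,b-1) * \{\{v\}\}.
\]
Rather than storing these families in full, I would maintain for each cell a family $\widehat{\mathcal{S}}_j(v,r,b) \subseteq^{k-j}_{rep} \mathcal{S}_j(v,r,b)$; the representation parameter $k-j$ is exactly the number of additional vertices that a partial path on $j+1$ vertices may still need in order to grow into a full path on $k+1$ vertices.

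Correctness of the bottom-up computation would follow from the three properties in Proposition \ref{prop:rep-sets-prop}, taking the number of vertices $k+1$ of a full solution as the representation target. A family $\widehat{\mathcal{S}}_{j-1}(u,\cdot,\cdot)$ consists of $j$-element sets and is $(k-(j-1))$-representative; convolving it with the singleton family $\{\{v\}\}$ (which is trivially $k$-representative of itself) yields, by the convolution property, a $(k-j)$-representative family of $\mathcal{S}_{j-1}(u,\cdot,\cdot) * \{\{v\}\}$. Taking the union over all neighbours $u$ and both colors preserves representation by the union property, and finally shrinking the result with Proposition \ref{prop:rep-sets-comp} and invoking transitivity produces the desired $\widehat{\mathcal{S}}_j(v,r,b)$. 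At the top level $\widehat{\mathcal{S}}_k(v,\frac{k}{2},\frac{k}{2})$ is a $0$-representative family of $\mathcal{S}_k(v,\frac{k}{2},\frac{k}{2})$ and is therefore non-empty precisely when $\mathcal{S}_k(v,\frac{k}{2},\frac{k}{2})$ is, so scanning over all $v$ gives the correct decision.

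For the running time, there are only $\mathcal{O}(nk^2)$ cells, so up to polynomial factors the cost is that of the representative-family computations. Since each family fed to Proposition \ref{prop:rep-sets-comp} is a union of polynomially many already-shrunk representative families, its size is $\mathcal{O}^*$ of the representative-family size, so the sharper running-time bound of that proposition applies with set size $p=j+1$; this bound is maximized near $p = (1-\frac{1}{\sqrt{5}})k$ and equals $\mathcal{O}^*(\phi^{2k+o(k)}) = \mathcal{O}^*(2.619^k)$, which dominates the whole algorithm. The one step that needs genuine care is the tracking of the representation parameter through the recurrence --- in particular, verifying that $(k-j)$-representation at level $j$ is strong enough for a full solution to survive all the way to level $k$; the extra color indices $(r,b)$ do not interfere with this argument and merely multiply the number of cells by a polynomial factor.
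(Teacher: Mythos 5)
Your proposal is correct and follows essentially the same route as the paper's proof: a dynamic program over partial paths indexed by endpoint and red/blue edge counts, storing $(k-j)$-representative families computed via the union, convolution, and transitivity properties of Proposition \ref{prop:rep-sets-prop}, shrunk with Proposition \ref{prop:rep-sets-comp}, with the same running-time analysis at $p=(1-\frac{1}{\sqrt{5}})k$. The only cosmetic difference is that you index families by the path's final vertex alone while the paper also carries the starting vertex, which changes nothing in the argument.
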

\begin{proof}
Consider an instance $(G,k)$.
Let $E_R$ and $E_B$ denote the sets of red and blue edges of $G$. 
For a pair of vertices $u,v \in V(G)$ and non-negative integers $r$ and $b$ with $r+b\geq 1$, define the family $\mathcal{P}^{(r,b)}_{uv}$ as follows.
\begin{multline*}
\mathcal{P}^{(r,b)}_{uv}= \{X \mid X \subseteq V(G), ~|X|=r+b+1 \text{ and there is a path P from } u \text{to } v  \text{ with } \\ 
 V(P)=X, ~|E_R \cap E(P)| =r \text{ and } |E_B \cap E(P)| =b  \}.
\end{multline*}

\noindent Now, it suffices to determine if $\mathcal{P}^{(\frac{k}{2},\frac{k}{2})}_{uv}$ is non-empty for some $u,v \in V(G)$. 
The families $\mathcal{P}^{(r,b)}_{uv}$ can be computed using the following formula.
For $r+b=1$, 
\begin{equation*}
    \mathcal{P}^{(1,0)}_{uv}=
    \begin{cases}
      \{\{u,v\}\}, & \text{if}\ \{u,v\} \in E_R \\
      \emptyset, & \text{otherwise}
    \end{cases}
    \text{ and }\
     \mathcal{P}^{(0,1)}_{uv}=
    \begin{cases}
      \{\{u,v\}\}, & \text{if}\ \{u,v\} \in E_B \\
      \emptyset, & \text{otherwise}
    \end{cases}
  \end{equation*}

\noindent and, for $r+b>1$,
\begin{equation*}
    \mathcal{P}^{(r,b)}_{uv}=(\underset{\{w,v\} \in E_R}\bigcup (\mathcal{P}^{(r-1,b)}_{uw} * \{\{v\}\})) \bigcup (\underset{\{w,v\} \in E_B}\bigcup (\mathcal{P}^{(r,b-1)}_{uw}* \{\{v\}\}))
  \end{equation*}
  
The base case is easy to verify. 
Consider a path $P$ such that $V(P) \in  \mathcal{P}^{(r,b)}_{uv}$. 
Then, $P$ has a subpath $P'$ from $u$ to a neighbour $w$ of $v$. 
If $\{w,v\}\in E_R$, then $P'$ has $r-1$ red edges and $b$ blue edges. 
Further, $V(P') \in \mathcal{P}^{(r-1,b)}_{uw}$ and $V(P)=V(P') \cup \{v\}$ where $v \notin V(P')$. 
Otherwise,  $\{w,v\}\in E_B$ and $P'$ has $r$ red edges and $b-1$ blue edges. 
Now, $V(P') \in \mathcal{P}^{(r,b-1)}_{uw}$ and $V(P)=V(P') \cup \{v\}$ where $v \notin V(P')$. 
On the other hand, for any element $X \in \mathcal{P}^{(r,b-1)}_{uw}$ such that there is a vertex $v$ with $v \notin X$ and $\{w,v\} \in E_B$, we have $X \cup \{v\} \in \mathcal{P}^{(r,b)}_{uv}$.  
Similarly, for any element $X \in \mathcal{P}^{(r-1,b)}_{uw}$ such that there is a vertex $v$ with $v \notin X$ and $\{w,v\} \in E_R$, we have $X \cup \{v\} \in \mathcal{P}^{(r,b)}_{uv}$.  
This justifies  the formula given for the computation of  $\mathcal{P}^{(r,b)}_{uv}$.

Clearly, a naive computation of $\mathcal{P}^{(r,b)}_{uv}$ is not guaranteed to result in an \FPT\ (in $k$) algorithm.
Therefore, instead of computing $\mathcal{P}^{(r,b)}_{uv}$, we only compute $\widehat{\mathcal{P}}^{(r,b)}_{uv} \subseteq^{k-(r+b)}_{rep} \mathcal{P}^{(r,b)}_{uv}$ and use the fact that 
$\widehat{\mathcal{P}}^{(\frac{k}{2},\frac{k}{2})}_{uv} \subseteq^0_{rep} \mathcal{P}^{(\frac{k}{2},\frac{k}{2})}_{uv}$.
If $\mathcal{P}^{(\frac{k}{2},\frac{k}{2})}_{uv}$ is nonempty, then it contains a set $X$ that is disjoint with $\emptyset$. As $\widehat{\mathcal{P}}^{(\frac{k}{2},\frac{k}{2})}_{uv} \subseteq^0_{rep} \mathcal{P}^{(\frac{k}{2},\frac{k}{2})}_{uv}$, it follows that $\widehat{\mathcal{P}}^{(\frac{k}{2},\frac{k}{2})}_{uv}$ also has a set $\widehat{X}$ that is disjoint with $\emptyset$. In other words, if $\mathcal{P}^{(\frac{k}{2},\frac{k}{2})}_{uv}$ is nonempty, then $\widehat{\mathcal{P}}^{(\frac{k}{2},\frac{k}{2})}_{uv}$ is also non-empty.

Now, we describe a dynamic programming algorithm to compute $\widehat{\mathcal{P}}^{(\frac{k}{2},\frac{k}{2})}_{uv}$ for every $u,v \in V(G)$.
For $r+b=1$, set $\widehat{\mathcal{P}}^{(1,0)}_{uv}=\mathcal{P}^{(1,0)}_{uv}$ and $\widehat{\mathcal{P}}^{(0,1)}_{uv}=\mathcal{P}^{(0,1)}_{uv}$.
Clearly, $\widehat{\mathcal{P}}^{(1,0)}_{uv} \subseteq^{k-1}_{rep} \mathcal{P}^{(1,0)}_{uv}$ and $\widehat{\mathcal{P}}^{(0,1)}_{uv} \subseteq^{k-1}_{rep}\mathcal{P}^{(0,1)}_{uv}$.
Further, $|\widehat{\mathcal{P}}^{(1,0)}_{uv}|,|\widehat{\mathcal{P}}^{(0,1)}_{uv}| \leq 1$ and this computation is polynomial time.
Now, we proceed to computing $\widehat{\mathcal{P}}^{(r,b)}_{uv} \subseteq^{k-(r+b)}_{rep} \mathcal{P}^{(r,b)}_{uv}$ for $k \geq r+b >1$ in the increasing order of $r+b$.  
Towards this, we compute a new  family $\widetilde{\mathcal{P}}^{(r,b)}_{uv}$ as follows.

\begin{equation*}
\widetilde{\mathcal{P}}^{(r,b)}_{uv}=(\underset{\{w,v\} \in E_R}\bigcup (\widehat{\mathcal{P}}^{(r-1,b)}_{uw}* \{\{v\}\})) \bigcup (\underset{\{w,v\} \in E_B}\bigcup \widehat{\mathcal{P}}^{(r,b-1)}_{uw} * \{\{v\}\}))
\end{equation*}
Using the union and convolution properties of Propositions \ref{prop:rep-sets-prop}, we have the following properties.\\
\begin{itemize}
\item $(\widehat{\mathcal{P}}^{(r-1,b)}_{uw}* \{\{v\}\}) \subseteq^{k-(r+b)}_{rep} (\mathcal{P}^{(r-1,b)}_{uw} * \{\{v\}\})$\\
\item $(\widehat{\mathcal{P}}^{(r,b-1)}_{uw}* \{\{v\}\}) \subseteq^{k-(r+b)}_{rep} (\mathcal{P}^{(r,b-1)}_{uw} * \{\{v\}\})$\\
\item $\underset{\{w,v\} \in E_R}\bigcup (\widehat{\mathcal{P}}^{(r-1,b)}_{uw}* \{\{v\}\}) \subseteq^{k-(r+b)}_{rep} \underset{\{w,v\} \in E_R}\bigcup (\mathcal{P}^{(r-1,b)}_{uw} * \{\{v\}\})$\\
\item $\underset{\{w,v\} \in E_B}\bigcup (\widehat{\mathcal{P}}^{(r,b-1)}_{uw}* \{\{v\}\}) \subseteq^{k-(r+b)}_{rep} \underset{\{w,v\} \in E_B}\bigcup (\mathcal{P}^{(r,b-1)}_{uw} * \{\{v\}\})$\\
\end{itemize}
Now, once again by the union property of Proposition \ref{prop:rep-sets-prop}, we have $\widetilde{\mathcal{P}}^{(r,b)}_{uv} \subseteq^{k-(r+b)}_{rep} \mathcal{P}^{(r,b)}_{uv}$.
Further, $|\widetilde{\mathcal{P}}^{(r,b)}_{uv}| = \mathcal{O}^*(|\widehat{\mathcal{P}}^{(r-1,b)}_{uv}| + |\widehat{\mathcal{P}}^{(r,b-1)}_{uv}|)$.
Then, we use Proposition \ref{prop:rep-sets-comp} to compute a family $\widehat{\mathcal{P}}^{(r,b)}_{uv} \subseteq^{k-(r+b)}_{rep} \widetilde{\mathcal{P}}^{(r,b)}_{uv}$.
By the transitivity property of Proposition \ref{prop:rep-sets-prop}, it follows that $\widehat{\mathcal{P}}^{(r,b)}_{uv} \subseteq^{k-(r+b)}_{rep} \mathcal{P}^{(r,b)}_{uv}$. The time taken to compute the families $\widehat{\mathcal{P}}^{(r,b)}_{uv}$ for $u,v \in V(G)$ and $r,b \leq \frac{k}{2}$ is $\mathcal{O}^*(2.619^k)$ from Proposition \ref{prop:rep-sets-comp} by substituting $r+b+1$ for $p$ and $k+1$ for $k$. Thus, the overall running time of the algorithm is $\mathcal{O}^*(2.619^k)$.
\end{proof}

\section{Concluding Remarks}
\label{sec:concl}
To summarize our work, we study the complexity of finding balanced connected subgraphs, trees and paths in red-blue graphs. We give fixed-parameter tractability results using color coding, representative sets and reductions to \textsc{Multilinear Monomial Detection}. En route, we give combinatorial results on the existence of small balanced connected subgraphs, trees and paths. We observe that these results also extend to vertex-balanced connected subgraphs, trees and paths. As a result the algorithms described in this work also generalize to solve the vertex-analogue of the problems. Note that using line graphs, one can reduce \textsc{Edge Balanced Connected Subgraph} to \textsc{Vertex Balanced Connected Subgraph}, however, when the solution is required to be a path or a tree, this reduction is not useful. Determining the complexity of finding other balanced substructures is an interesting direction of research. It is well-known (by a observation made by Erd\H{o}s) and easy to verify that a monochromatic spanning tree exists in any red-blue complete graph. This fact has been generalized in several directions. Gy{\'a}rfas \cite{Gyarfas77} and F{\"u}redi \cite{Furedi81} independently showed that every $r$-edge-coloring of the complete graph on $n$ vertices results in a monochromatic connected subgraph of size at least $n/(r-1)$. Bollob{\'a}s and Gy{\'a}rfas \cite{BollobasG08} studied monochromatic 2-connected subgraphs and showed that every red-blue complete graph on $n \geq 5$ vertices has a monochromatic 2-connected subgraph with at least $n-2$ vertices. They also dealt with questions on the existence of monochromatic $q$-connected subgraphs. Variants such as large monochromatic components of small diameter have also attracted attention recently \cite{CarlsonMPR22}. A similar combinatorial (and algorithmic) study on balanced spanning trees, balanced spanning connected subgraphs and balanced $q$-connected subgraphs is worth investigating. Also, studying the problems on graphs that are colored using more than two colors and on colored weighted graphs are next natural questions in this context. \\

%\smallskip
%\noindent 
%{\bf Acknowledgements. }The second author is supported by SERB CRG grant number CRG/2022/007751.
%\bibliographystyle{elsarticle-num} 
\bibliography{refs}
\end{document}